\keywords{Differential Privacy, Natural Language Processing}
\newcommand{\R}{\mathbb{R}} 
\newcommand{\cD}{\mathcal D}
\newcommand{\cI}{\mathcal I}
\newcommand{\cN}{\mathcal N}
\newcommand{\supp}{\mathrm{supp}} 
\newcommand{\norm}[1]{\left\lVert #1\right\rVert} 
\newcommand{\ellnorm}[2]{\norm{#1}_{\ell_{#2}}} 
\newcommand{\set}[1]{\{#1\}}
\renewcommand{\epsilon}{\varepsilon}
\newcommand{\eps}{\epsilon}
  \newcommand{\beq}{\begin{equation}}
  \newcommand{\eeq}{\end{equation}}
  \newcommand{\beqn}{\begin{equation*}}
  \newcommand{\eeqn}{\end{equation*}}
  \newcommand{\beqr}{\begin{eqnarray}}
  \newcommand{\eeqr}{\end{eqnarray}}
  \newcommand{\beqrn}{\begin{eqnarray*}}
  \newcommand{\eeqrn}{\end{eqnarray*}}
  \newcommand{\bmline}{\begin{multline}}
  \newcommand{\emline}{\end{multline}}
  \newcommand{\bmlinen}{\begin{multline*}}
  \newcommand{\emlinen}{\end{multline*}}
\theoremstyle{plain}
\theoremstyle{definition}
\newtheorem{definition}{Definition}[section]
\newtheorem{theorem}{Theorem}[section]
\newtheorem{proposition}{Proposition}[section]
\newtheorem{claim}{Claim}[section]
\newtheorem{lemma}{Lemma}[section]
\newtheorem{problem}{Problem}[section]
\newcommand{\COMMENTALG}[2][.7\linewidth]{\leavevmode\hfill\makebox[#1][l]{//~#2}}
\newcommand{\Lap}{\mathsf{Lap}}
\newcommand{\Gauss}{\mathsf{Gauss}}
\newcommand{\bigzero}{\mbox{\normalfont\Large\bfseries 0}}
\theoremstyle{plain} 
\begin{document}

\title[]{Differentially Private Set Union\rsuper*}
\titlecomment{{\lsuper*}A preliminary version of this paper appeared in ICML 2020 (\cite{gopi2020differentially}).}

\author[S.~Gopi]{Sivakanth Gopi}	
\address{Microsoft, Redmond, WA, USA}	
\email{sigopi@microsoft.com}  

\author[P.~Gulhane]{Pankaj Gulhane}	
\address{Microsoft, Redmond, WA, USA}	
\email{pagulhan@microsoft.com}  

\author[J.~Kulkarni]{Janardhan Kulkarni}	
\address{Microsoft, Redmond, WA, USA}	
\email{jakul@microsoft.com}  

\author[JH.~Shen]{Judy Hanwen Shen}	
\address{Computer Science Department, Stanford University, Palo Alto, CA, USA}	
\email{jhshen@cs.stanford.edu}  

\author[M.~Shokouhi]{Milad Shokouhi}	
\address{Microsoft, Redmond, WA, USA}	
\email{milads@microsoft.com}

\author[S.~Yekhanin]{Sergey Yekhanin}	
\address{Microsoft, Redmond, WA, USA}	
\email{yekhanin@microsoft.com}




\begin{abstract}
  \noindent We study the basic operation of set union in the global model of differential privacy. In this problem, we are given a universe $U$ of items, possibly of infinite size, and a database $D$ of users. Each user $i$ contributes a subset $W_i \subseteq U$ of items. We want an ($\epsilon$,$\delta$)-differentially private algorithm which outputs a subset $S \subset \cup_i W_i$ such that the size of $S$ is as large as possible. The problem arises in countless real world applications; it is particularly ubiquitous in natural language processing (NLP) applications as vocabulary extraction. For example, discovering words, sentences, $n$-grams etc., from private text data belonging to users is an instance of the set union problem. Known algorithms for this problem proceed by collecting a subset of items from each user, taking the union of such subsets, and disclosing the items whose noisy counts fall above a certain threshold. Crucially, in the above process, the contribution of each individual user is always independent of the items held by other users, resulting in a wasteful aggregation process, where some item counts happen to be way above the threshold. We deviate from the above paradigm by allowing users to contribute their items in a {\em dependent fashion}, guided by a {\em policy}. In this new setting ensuring privacy is significantly delicate. We prove that any policy which has certain {\em contractive} properties would result in a differentially private algorithm. We design two new algorithms for differentially private set union, one using Laplace noise and other Gaussian noise, which use $\ell_1$-contractive and $\ell_2$-contractive policies respectively and provide concrete examples of such policies. Our experiments show that the new algorithms in combination with our policies significantly outperform previously known mechanisms for the problem.
\end{abstract}

\maketitle

\section{Introduction}

\label{sec:intro}
Natural language models for applications such as suggested replies for e-mails and dialog systems rely on the discovery of $n$-grams and sentences \cite{HLLC14, KannanK16, ChenL19, DebB19}. Words and phrases used for training come from individuals, who may be left vulnerable if personal information is revealed. For example, a model could generate a sentence or predict a word that can potentially reveal personal information of the users in the training set \cite{CarliniL19}. 
Therefore, algorithms that allow the public release of the words, $n$-grams, and sentences obtained from users' text while preserving privacy are desirable. Additional applications of this problem include the release of search queries and keys in SQL queries \cite{KKMN09, WilsonZ20}.  
While other privacy definitions are common in practice, guaranteeing differential privacy, introduced in the seminal work of Dwork {\em et al} \cite{DMNS06}, ensures users the strongest preservation of privacy. In this paper we consider user level privacy.
    
\begin{definition}[Differential Privacy \cite{DR14}]
	A randomized algorithm $\mathcal{A}$ is  ($\epsilon$,$\delta$)-differentially private if for any two neighboring databases $D$ and $D'$, where a single user's data is removed from one database to obtain the other, and for all sets $\mathcal{S}$ of possible outputs: 
$$
\textstyle{\Pr[\mathcal{A}(D) \in \mathcal{S}] \leq e^{\epsilon}\Pr[\mathcal{A}(D') \in \mathcal{S}] +\delta.}
$$
\end{definition}

An algorithm satisfying differential privacy (DP) guarantees that its output does not change by much if a single user is either added or removed from the dataset. Moreover, the guarantee holds regardless of how the output of the algorithm is used downstream. Therefore, items (e.g. n-grams) produced using a DP algorithm can be used in other applications without any privacy concerns. Since its introduction a decade ago \cite{DMNS06}, differential privacy has become the de facto notion of privacy in statistical analysis and machine learning, with a vast body of research work (see Dwork and Roth \cite{DR14}
and Vadhan \cite{V17} for surveys) and growing acceptance in industry. Differential privacy is deployed in many industries, including Apple~\cite{A17}, Google~\cite{EPK14,BEMMR+17}, Microsoft~\cite{DKY17}, Mozilla~\cite{AKZHL17}, and the US Census Bureau~\cite{A16, KCKHM18}.

\smallskip
The vocabulary extraction and $n$-gram discovery problems mentioned above, as well as many commonly studied problems \cite{KKMN09, WilsonZ20}, can be abstracted as a set union which leads to the following problem.

\begin{problem}[Differentially Private Set Union (DPSU)]
	Let $U$ be some universe of items, possibly of unbounded size. Suppose we are given a database $D$ of users where each user $i$ has a subset $W_i \subseteq U$. We want an ($\epsilon$,$\delta$)-differentially private Algorithm $A$ which outputs a subset $S\subseteq \cup_i W_i$ such that the size of $S$ is as large as possible.
\end{problem}

Since the universe of items can be unbounded, as in our motivating examples, it is not clear how to apply the exponential mechanism \cite{McSherryT07} to DPSU.
Furthermore, even for the cases when $U$ is bounded, implementing the exponential mechanism can be also very inefficient.
Existing algorithms \footnote{They don't study the DPSU problem as defined in this paper. Their goal is to output approximate counts of as many items as possible in $\cup_i W_i.$} for this problem \cite{KKMN09, WilsonZ20} collect a bounded number of items from each user, build a histogram of these items, and disclose the items whose noisy counts fall above a certain threshold. In these algorithms, the contribution of each user is always {\em independent} from the identity of items held by other users, resulting in a wasteful aggregation process, where some items' counts could be far above the threshold. Since the goal is to release as large a set as possible rather than to release accurate counts of each item, there could be more efficient ways to allocate the weight to users' items. 

\begin{figure}[ht]
\begin{center}
\includegraphics[width=12cm]{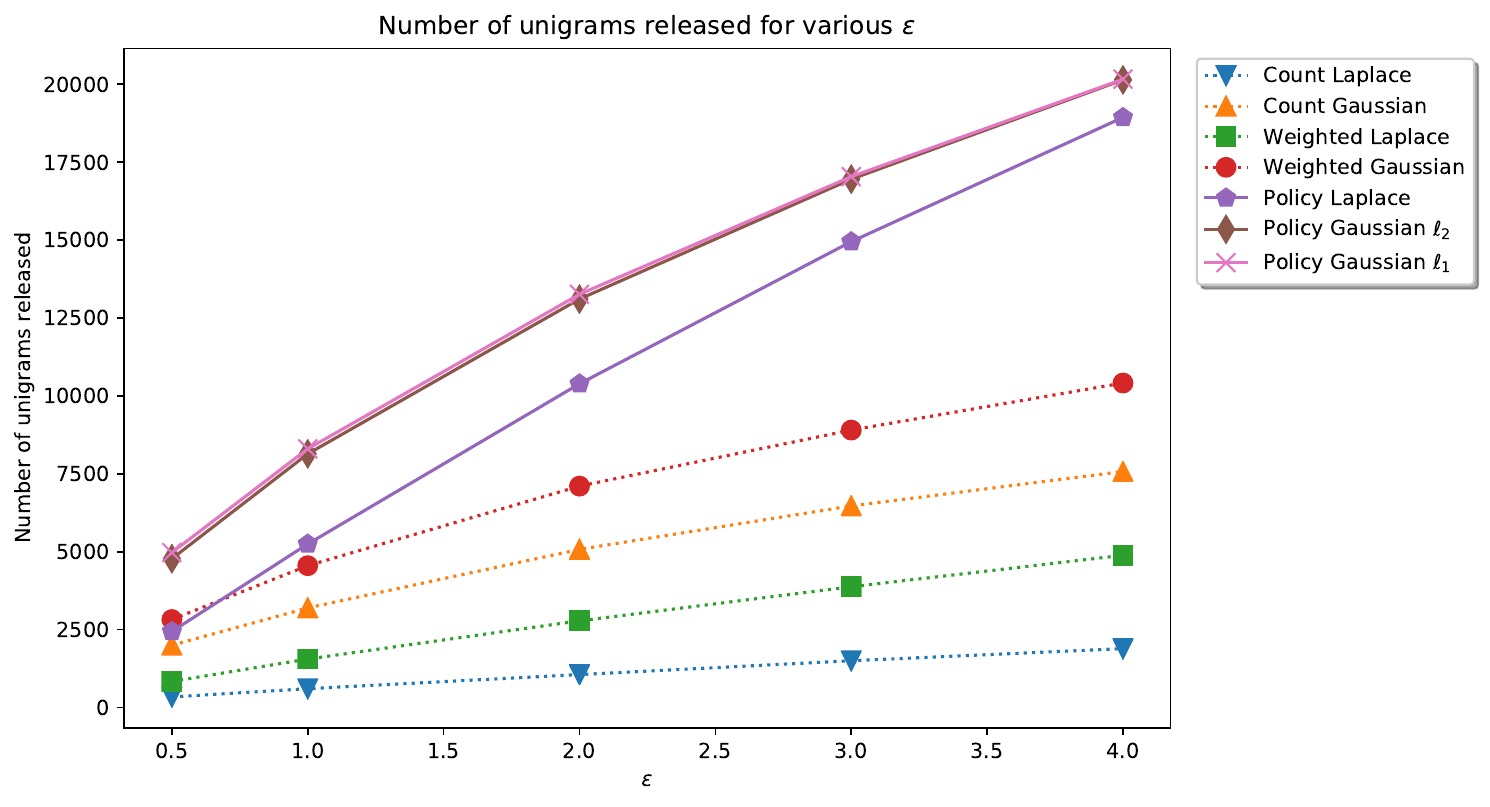}
\caption{\small Size of the set output by our proposed algorithms \textsc{Policy Laplace} and \textsc{Policy Gaussian} compared to natural generalizations of previously known algorithms for various values of privacy parameter $\eps$ and $\delta=\exp(-10)$.}
\label{fig:eps_fig1}
\end{center}
\end{figure}

\noindent We deviate from the previous methods by allowing users to contribute their items in a {\em dependent fashion}, guided by an {\em update policy}.
In our algorithms, proving privacy is more delicate as some update policies can result in histograms with unbounded sensitivity. 
We prove a meta-theorem to show that update policies with certain {\em contractive properties} would result in differentially private algorithms.
The main contributions of the paper are:
 
\begin{itemize}
\item Guided by our meta-theorems, we introduce two new algorithms called \textsc{Policy Laplace} and \textsc{Policy Gaussian} for the DPSU problem. Both of them run in \emph{linear time} and only require a single pass over the users' data.
\item Using a Reddit dataset, we demonstrate that our algorithms significantly improve the size of DP set union even when compared to natural generalizations of the existing mechanisms for this problem (see Figure \ref{fig:eps_fig1}). We also show that our algorithms compare favorably to \textsc{$k$-anonymity} which is an ad hoc method used in practice that is not differentially private.
\end{itemize}

\subsection{Baseline algorithms}
\label{sec:baseline}

To understand the DPSU problem better, let us start with the simplest case we can solve by known techniques.
Define $\Delta_0 = \max_{i} |W_i|$.
Suppose $\Delta_0 = 1$. This special case can be solved using the algorithms in \cite{KKMN09, WilsonZ20}.
Their algorithm works as follows: Construct a histogram on $\cup_{i}W_i$ (the set of items in a database $D$) where the count of each item is the number of sets it belongs to. Then add Laplace noise or Gaussian noise to the counts of each item.
Finally, release only those items whose noisy histogram counts are above a certain {\em threshold} $\rho$.
It is not hard to prove that if the threshold is set sufficiently high, then the algorithm is $(\epsilon, \delta)$-DP.

\smallskip

A straight-forward extension of the histogram algorithm for $\Delta_0 > 1$ is to upper bound the $\ell_1$-sensitivity by $\Delta_0$ (and $\ell_2$-sensitivity by $\sqrt{\Delta_0}$), and then add some appropriate amount of Laplace noise (or Gaussian noise) based on sensitivity. The threshold $\rho$ has to be set based on $\Delta_0.$ The Laplace noise based algorithm was also the approach considered in \cite{KKMN09, WilsonZ20}.
This approach has the following drawback. Suppose a significant fraction of users have sets of size smaller than $\Delta_0$. Then constructing a histogram based on {\em counts} of the items results in {\em wastage of sensitivity budget}. A user $i$ with $|W_i| < \Delta_0$ can increment the count of items in $W_i$ by any vector $v \in \R^{W_i}$ as long as one can ensure that $\ell_1$ sensitivity is bounded by $\Delta_0$ (or $\ell_2$ sensitivity is bounded by $\sqrt{\Delta_0}$ if adding Gaussian noise). Consider the following natural generalization of Laplace and Gaussian mechanisms  to create a {\em weighted histogram of elements}. A weighted histogram over a domain $X$ is any map $H:X\to \R$. For an item $u\in U,$ $H(u)$ is called the weight of $u.$ In the rest of the paper, the term histogram should be interpreted as weighted histogram. Each user $i$ updates the weight of each item $u \in W_i$ using the rule:$H[u] := H[u] + (\Delta_0/|W_i|)^{1/p}$ for $p =1$ or $p=2$.
It is not hard to see that $\ell_p$-sensitivity of this weighted histogram is still $\Delta_0^{1/p}$.
Adding Laplace noise (for $p = 1$) or Gaussian noise (for $p=2$) to each item of the weighted histogram, and releasing only those items above an appropriately calibrated threshold will lead to differentially private output.
We call these algorithms as \textsc{Weighted Laplace} and \textsc{Weighted Gaussian}, they will be used as benchmarks to compare against our new algorithms.

\textbf{Related Work:} After a preliminary version of our work was published (\cite{gopi2020differentially}), a simple and nearly optimal algorithm for DPSU in the special case where every user contributes exactly one item (i.e., $\Delta_0=1$) is given by \cite{desfontaines2020differentially}. Our DPSU algorithms have been used for differentially private $n$-gram extraction (DPNE) by~\cite{kim2021differentially}. In DPNE, the goal is to learn as many $n$-grams as possible of varying lengths from a corpus of text data, this can be thought of as a generalization of DPSU where we only learn $1$-grams.

\subsection{Our techniques}
\label{sec:techniques}

The \textsc{Weighted Laplace} and \textsc{Weighted Gaussian} mechanisms described above can be thought of trying to solve the following variant of a {\em Knapsack} problem. Here each item $u \in U$ is a bin and we gain a profit of 1 if the total weight of the item in the weighted histogram constructed is more than the threshold.  
Each user can increment the weight of elements $u \in W_i$ using an \emph{update policy} $\phi$ which is defined as follows.
\begin{definition}[Update policy]
   \label{def:update_policy}
    An update policy is a map $\phi:\R^U \times 2^U \to \R^U$ such that $\supp(\phi(H,W)-H)\subset W$, i.e., $\phi$ can only update the weights of items in $W$. And the $i^{th}$ user updates $H$ to $\phi(H,W_i).$ Since $W_i$ is typically understood from context, we will write $\phi(H)$ instead of $\phi(H,W_i)$ for simplicity.
\end{definition}

\smallskip
\noindent In this framework, the main technical challenge is the following:

\smallskip

{\em How to design update policies such that the sensitivity of the resulting weighted histogram is small while maximizing the number of bins that are full?}

\smallskip

Note that bounding sensitivity requires that $\ellnorm{\phi(H,W)-H}{p}\le C$ for some constant $C$ i.e. each user has an $\ell_p$-budget of $C$ and can increase the weights of items in their set by an $\ell_p$-distance of at most $C$. By scaling, WLOG we can assume that $C=1.$
Note that having a larger value of $\Delta_0$ should help in filling more bins as users have more choice in how they can use their budget to increment the weight of items.

In this paper, we consider algorithms which {\em iteratively} construct the weighted histogram. 
That is, in our algorithms, we consider users in a random order, and each user updates the weighted histogram using the update policy $\phi.$ Algorithm~\ref{alg:meta} is a meta-algorithm for DP set union, and all our subsequent algorithms follow this framework.

\begin{algorithm}[!ht]
 \caption{High level meta algorithm for DP Set Union}
 \label{alg:meta}
 
 \begin{algorithmic}
   \STATE {\bfseries Input:} $D$: Database of $n$ users where each user $i$ has some subset $W_i\subset U$\\
 $\rho$: threshold\\
 \textsf{Noise}: Noise distribution ($\Lap(0,\lambda)$ or $\cN(0,\sigma^2)$)\\
   \STATE {\bfseries Output:} S: A subset of $\cup_i W_i$ \\
   Build weighted histogram $H$ supported over $\cup_i W_i$ using Algorithm~\ref{alg:meta_histogram}.\\
   $S = \{\} $ (empty set)\\
   \FOR{$u \in \cup_i W_i$}
   \STATE $\hat{H}[u]\leftarrow H[u]+\textsf{Noise}$\\
   \IF{$\hat{H}[u]>\rho$}
   \STATE $S\leftarrow S\cup \{u\}$
   \ENDIF
   \ENDFOR
   \STATE Output $S$
\end{algorithmic}
\end{algorithm}

\begin{algorithm}[!ht]
 \caption{High level meta algorithm for building weighted histogram using a given update policy}
 \label{alg:meta_histogram}
 
 \begin{algorithmic}
   \STATE {\bfseries Input:} $D$: Database of $n$ users where each user $i$ has some subset $W_i\subset U$\\
 $\Delta_0$: maximum contribution parameter\\
 \textsf{hash}: A random hash function which maps user ids into some large domain without collisions\\
 $\phi$: Update policy for a user to update the weights of items in their set
   \STATE {\bfseries Output:} H: A weighted histogram in $\R^{\cup_i W_i}$ \\
 $H = \lbrace \rbrace$ (empty histogram) \\
   \STATE Sort users into $\mathrm{User}_1,\mathrm{User}_2,\dots,\mathrm{User}_n$ by sorting the \textsf{hash} values of their user ids\\
   \FOR{$i=1$ {\bfseries to} $n$}
   \STATE $W_i \leftarrow $ set with $\mathrm{User}_i$
   \IF{$|W_i|>\Delta_0$}
   \STATE $W_i' \leftarrow$ Randomly choose $\Delta_0$ items from $W_i$
   \ELSE 
   \STATE $W_i' \leftarrow W_i$
   \ENDIF
   \STATE Update $H[u]$ for each $u\in W_i'$ using update policy $\phi$\\
   \ENDFOR
   \STATE Output $H$
\end{algorithmic}
\end{algorithm}

If the update policy is such that it increments the weights of items independent of other users (as done in \textsc{Weighted Laplace} and \textsc{Weighted Gaussian}), then it is not hard to see that sensitivity of $H$ can be bounded by $1$; that is, by the budget of each user. However, if some item is already way above the threshold $\rho,$ then it does not make much sense to waste the limited budget on that item. Ideally, users can choose a clever \emph{update policy} to distribute their budget among the $W_i$ items based on the current weights. 

Note that if a policy is such that updates of a user depends on other users, it can be quite tricky to bound the sensitivity of the resulting weighted histogram. To illustrate this, consider for example the {\em greedy} update policy. Each user $i$ can use his budget of 1 to fill the bins that is closest to the threshold among the bins $u \in W_i$. If an item already reached the threshold, the user can spend his remaining budget incrementing the weight of next bin that is closest to the threshold and so on. Note that from our Knapsack problem analogy this seems be a good way to maximize the number of bins filled. However such a greedy policy can have very large sensitivity, and hence won't lead to any reasonable DP algorithm. 
So, the main contribution of the paper is in showing policies which help maximize the number of item bins that are filled while keeping the sensitivity low. 
In particular, we define a general class of $\ell_p$-contractive update policies and show that they produce weighted histograms with bounded $\ell_p$-sensitivity.

\begin{definition}[$\ell_p$-contractive update policy]
\label{def:ellp_contractive_policy}
   We say that an update policy $\phi$ is $\ell_p$-contractive if there exists a subset $\cI$ (called the invariant subset for $\phi$) of pairs of weighted histograms which are at an $\ell_p$ distance of at most 1, i.e., $$\cI\subset \left\{(H_1,H_2): \ellnorm{H_1-H_2}{p}\le 1\right\}$$ such that the following conditions hold.
   \begin{enumerate}
      \item (Invariance) $(H_1,H_2)\in \cI \Rightarrow (\phi(H_1,W),\phi(H_2,W))\in \cI$ for all $W$.\footnote{Note that property (1) is a slightly weaker requirement than the usual notion of $\ell_p$-contractivity which requires $\ellnorm{\phi(H_1,W)-\phi(H_2,W)}{p}\le \ellnorm{H_1-H_2}{p}$ for all $H_1,H_2.$ Instead we require contraction only for $(H_1,H_2)\in \cI.$ }
      \item $(\phi(H,W),H)\in \cI$ for all $H,W$.
   \end{enumerate}
\end{definition}
Property (2) of Definition~\ref{def:ellp_contractive_policy} requires that the update policy can change the histogram by an $\ell_p$ distance of at most 1 (budget of a user).

\begin{theorem}[Contractivity implies bounded sensitivity]
\label{thm:contractivity_implies_sensitivity}
Suppose $\phi$ is an update policy which is $\ell_p$-contractive over some invariant subset $\cI$. Then the histogram output by Algorithm~\ref{alg:meta_histogram} (for any fixed choice of $W_i'\subset W_i$ for each user) has $\ell_p$-sensitivity bounded by 1.
\end{theorem}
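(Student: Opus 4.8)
The plan is to fix all sources of internal randomness, couple the two runs of Algorithm~\ref{alg:meta_histogram} on neighboring databases, and then show that the running pair of histograms lies in the invariant set $\cI$ from the moment the two runs diverge until the end. Since $\cI$ contains only pairs at $\ell_p$ distance at most $1$, this immediately bounds the sensitivity.

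First I would set up the two instances. Using the add/remove adjacency, I may assume $D'$ is obtained from $D$ by inserting one extra user with set $W_*$ (the removal case being symmetric). I couple the randomness by running both executions with the \emph{same} hash function and the \emph{same} subsampling choices for every common user; this is legitimate because the noise in Algorithm~\ref{alg:meta} is added afterwards and independently, so it suffices to bound $\ellnorm{H(D)-H(D')}{p}$ under this coupling. Under the common hash function the shared users appear in the same relative order in both runs, and the extra user is simply inserted at a single position determined by its hash value. I would regard both histograms as living in $\R^U$ (unseen items having weight $0$), so that the support condition $\supp(\phi(H,W)-H)\subset W$ and the $\ell_p$ distance are unambiguous even though the nominal supports $\cup_i W_i$ differ.

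Next comes the core invariance argument. Processing the users up to the insertion position is identical in both runs, so just before the extra user acts the two histograms coincide at some common value $H$. At that step the extra user updates only the $D'$ run, sending $H \mapsto \phi(H)$ while the $D$ run stays at $H$; by property (2) of Definition~\ref{def:ellp_contractive_policy} the resulting pair $(\phi(H),H)$ lies in $\cI$. From here on both runs feed their histograms through the \emph{same} update maps $\phi(\cdot,W_i')$, one common user at a time. Property (1) (invariance under $\phi$) says exactly that applying the same update to a pair in $\cI$ keeps it in $\cI$, so by induction over the remaining users the pair stays in $\cI$ throughout. At termination the final pair $(H(D'),H(D))$ is therefore in $\cI$, and by the defining containment $\cI\subset\{(H_1,H_2):\ellnorm{H_1-H_2}{p}\le 1\}$ we obtain $\ellnorm{H(D')-H(D)}{p}\le 1$. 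Since this holds for every coupling and every neighboring pair, the $\ell_p$-sensitivity of $H$ is at most $1$.

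The step I expect to be the main obstacle is the reduction itself: recognizing that the two invariant-set axioms are tailored precisely so that property (2) \emph{initializes} the divergent pair inside $\cI$ and property (1) \emph{propagates} membership through every subsequent dependent update. This is the crux because the policy $\phi$ need not be globally contractive — a naive triangle-inequality bound would accumulate error across users and fail for dependent policies such as the greedy one. Care is also needed in the coupling so that the single inserted user creates only one instant of divergence rather than two, which is exactly what pins the sensitivity at $1$ rather than $2$; this is where the add/remove adjacency is essential.
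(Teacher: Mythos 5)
Your proposal is correct and follows essentially the same argument as the paper: the extra user's hash position splits the run so that the histograms agree up to that point, property (2) places the diverged pair in $\cI$, and property (1) propagates membership in $\cI$ through the remaining common users processed in the same order. Your additional care about coupling the subsampling randomness and embedding both histograms in $\R^U$ is a welcome tightening of details the paper leaves implicit, but it does not change the route.
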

We prove Theorem~\ref{thm:contractivity_implies_sensitivity} in Section~\ref{sec:contractivity}.
Once we have bounded $\ell_p$-sensitivity, we can get a DP Set Union algorithm with some additional technical work as stated in this informal theorem (see Appendix~\ref{sec:sensitivity_implies_DP} for a formal version).
\begin{theorem}(Informal: Bounded sensitivity implies DP)
   \label{thm:sensitivity_implies_DP_informal}
   For $p\in \{1,2\}$, if the $\ell_p$-sensitivity of the weighted histogram output by Algorithm~\ref{alg:meta_histogram} is bounded, then Algorithm~\ref{alg:meta} for DP Set Union can be made $(\eps,\delta)$-differentially private by appropriately choosing the noise distribution (\textsf{Noise}) and threshold ($\rho$). 
\end{theorem}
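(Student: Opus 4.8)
The plan is to reduce the analysis to the textbook Laplace and Gaussian mechanisms, isolating the genuinely new difficulty, namely that the histogram's support is itself data dependent. Fix a neighboring pair $D,D'$; without loss of generality $D'=D\cup\{v\}$ is obtained from $D$ by inserting one user $v$, and I condition throughout on the shared internal randomness of Algorithm~\ref{alg:meta_histogram} (the hash-induced order and the subsamplings $W_i\to W_i'$), coupled identically on the common users, so that the two histograms $H$ (on $D$) and $H'$ (on $D'$) are deterministic. Since inserting a user only adds items, $\supp(H)\subseteq\supp(H')$, and every item of $N:=\supp(H')\setminus\supp(H)$ is contributed solely by $v$; in particular such items have weight exactly $0$ in $H$, and $|N|\le |W_v'|\le\Delta_0$. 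Extending $H$ by zeros to the domain $\supp(H')$, Theorem~\ref{thm:contractivity_implies_sensitivity} gives $\ellnorm{H-H'}{p}\le 1$.

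The key device is an auxiliary mechanism $\widetilde A$ that runs exactly like Algorithm~\ref{alg:meta} but iterates the noisy-threshold loop over the fixed domain $\supp(H')$ for \emph{both} databases (so on $D$ it also considers the weight-$0$ items of $N$). On this fixed domain the released set is a post-processing of the noisy vector $(H[u]+\textsf{Noise})_{u\in\supp(H')}$ of $\ell_p$-sensitivity $\le 1$; hence $\widetilde A$ is $(\epsilon,0)$-DP for $p=1$ with $\Lap(0,1/\epsilon)$ and $(\epsilon,\delta_1)$-DP for $p=2$ with $\cN(0,\sigma^2)$, $\sigma=\sqrt{2\ln(1.25/\delta_1)}/\epsilon$. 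Because $A(D')=\widetilde A(D')$ (the actual loop over $\supp(H')$ coincides with the extended one), it remains only to compare $A(D)$ with $\widetilde A(D)$: coupling them with a common noise draw, they make identical decisions on $\supp(H)$ and differ only when $\widetilde A(D)$ releases some item of $N$, so $\abs{\Pr[A(D)\in\mathcal{S}]-\Pr[\widetilde A(D)\in\mathcal{S}]}\le\delta_3$, where $\delta_3$ is the probability that $\widetilde A(D)$ releases any item of $N$. Chaining these facts in both directions yields, for every event $\mathcal{S}$, $\Pr[A(D)\in\mathcal{S}]\le e^{\epsilon}\Pr[A(D')\in\mathcal{S}]+\delta_1+\delta_3$ and the symmetric inequality with an extra factor $e^{\epsilon}$ on $\delta_3$, establishing $(\epsilon,\delta)$-DP with $\delta=\delta_1+e^{\epsilon}\delta_3$.

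It then suffices to make $\delta_3$ small by choosing the threshold $\rho$ high, which is where the parameters get tied together. Each item of $N$ has weight $0$ under $D$, so it is released by $\widetilde A(D)$ with probability $\Pr[\textsf{Noise}>\rho]$, and a union bound over the at most $\Delta_0$ items of $N$ gives $\delta_3\le\Delta_0\cdot\Pr[\textsf{Noise}>\rho]$, i.e.\ $\tfrac{\Delta_0}{2}e^{-\rho\epsilon}$ for Laplace and a Gaussian tail for $\cN(0,\sigma^2)$; taking $\rho=\Theta(\log(\Delta_0/\delta)/\epsilon)$ (resp.\ $\rho=\Theta(\sigma\sqrt{\log(\Delta_0/\delta)})$) forces $e^{\epsilon}\delta_3\le\delta/2$, while $\delta_1\le\delta/2$ handles the Gaussian term. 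The main obstacle throughout is exactly that the support of $H$ depends on the data, so the standard mechanisms do not apply verbatim; the extension to the common domain $\supp(H')$ together with the stability bound on $N$ (enabled by the high threshold and the $|N|\le\Delta_0$ bound) is what circumvents it. A secondary point to check carefully is that conditioning on the subsampling and ordering randomness is legitimate: that randomness is coupled identically for the common users and is independent of whether $v$ is present, so privacy for each fixing yields privacy in expectation over it.
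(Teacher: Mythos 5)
Your proposal is correct in substance but takes a genuinely different route from the paper's. The paper (proofs of Theorems~\ref{thm:policy-laplace} and~\ref{thm:policy-gaussian}, adapted verbatim into Theorems~\ref{thm:sensitivity_implies_DP_Laplace} and~\ref{thm:sensitivity_implies_DP_Gaussian}) uses a conditioning argument: the bad event is defined under the distribution $P$ of the \emph{larger} database---namely that the released set contains an item outside $\supp(H_2)$---then $P|_E \approx_{\eps,\cdot} Q$ is shown by post-processing (Lemma~\ref{lem:post_processing_eps_delta}) of the Laplace/Gaussian mechanism restricted to the common support, and $P(\bar E)$ is absorbed via Lemma~\ref{lem:eps_delta_conditioning}, which avoids any $e^\eps$ inflation of the bad-event probability. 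You instead construct a hybrid mechanism $\widetilde A$ over the fixed extended domain $\supp(H')$, inherit its privacy from the textbook mechanisms, identify $\widetilde A(D')$ with $A(D')$, and control $A(D)$ versus $\widetilde A(D)$ by a coupling argument; your bad event lives on the \emph{smaller} database's side, where the phantom items have weight exactly $0$. The trade-offs: your threshold needs no additive $1$ (or $1/t$) term since the phantom weights are zero, but you pay an $e^\eps$ factor ($\delta=\delta_1+e^\eps\delta_3$), and your union bound yields a $\log(\Delta_0/\delta)$ dependence where the paper's exact product over independent noise coordinates yields the marginally sharper $\left(1-(1-\delta)^{1/t}\right)$ form. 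Two caveats, neither fatal at the stated (informal) level of rigor: first, your claim that every item of $N$ is contributed solely by $v$, hence $|N|\le\Delta_0$, does not follow from bounded sensitivity alone---for a general update policy a common user's updates can differ between the two runs, creating support differences outside $W_v'$; the paper's own proof makes the identical unstated assumption ($|T|\le\Delta_0$), which does hold for the specific $\ell_1$/$\ell_2$-descent policies. Second, the calibration $\sigma=\sqrt{2\ln(1.25/\delta_1)}/\eps$ is valid only for $\eps\le 1$; to cover all $\eps$ (the experiments use $\eps=3$) you should instead invoke the analytic Gaussian mechanism (Proposition~\ref{lem:gaussian_mechanism}) as the paper does, which slots into your argument without any structural change.
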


The main contribution of the paper is two new algorithms and appropriate contractive update policies guided by Theorem \ref{thm:contractivity_implies_sensitivity}. The first algorithm, which we call \textsc{Policy Laplace}, uses policies which are $\ell_1$-contractive. The second algorithm, which we call \textsc{Policy Gaussian}, uses policies which are $\ell_2$-contractive. Finally we show that our algorithms with appropriate update policies significantly outperform the weighted update policies. 

At a very high-level, the role of contractivity in our algorithms is indeed similar to its role in the recent elegant work of Feldman {\em et al} \cite{FeldmanMTT18}. They show that if an iterative algorithm is contractive in each step, then adding Gaussian noise in each iteration will lead to strong privacy amplification. In particular, users who make updates early on will enjoy much better privacy guarantees. 
However their framework is not applicable in our setting, because their algorithm requires adding noise to the count of every item in every iteration; this will lead to unbounded growth of counts and items which belong to only a single user can also get output which violates privacy.

\section{Preliminaries}
\label{sec:prelims}
Let $\cD$ denote the collection of all databases. We say that $D,D'$ are neighboring databases, denoted by $D\sim D'$, if they differ in exactly one user.
\begin{definition}
For $p\ge 0,$ the $\ell_p$-sensitivity of $f:\cD\to \R^k$ is defined as $\sup_{D\sim D'} \ellnorm{f(D)-f(D')}{p}$ where the supremum is over all neighboring databases $D,D'$.
\end{definition}

\begin{proposition}[The Laplace Mechanism \cite{DR14}] Given any function $f: D \rightarrow \mathbb{R}^k$, the Laplace Mechanism is defined as: 
\begin{equation}
	\mathcal{M}(x, f(.), \eps) = f(x) + (Y_i, ..., Y_k)
\end{equation}
where $\Delta_1$ is the $\ell_1$-sensitivity and $Y_i$ are i.i.d. random variables drawn from $\Lap(0, \Delta_1/\eps)$ .
\end{proposition}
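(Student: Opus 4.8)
The proposition records the definition of the Laplace mechanism; the substantive content to be established is that $\mathcal{M}$ is $(\eps,0)$-differentially private. The plan is to control the ratio of output densities pointwise for any pair of neighboring databases and then integrate over an arbitrary output event to recover the definition of differential privacy.

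First I would write down the output density. Since the coordinates $Y_i$ are drawn independently from $\Lap(0,\Delta_1/\eps)$, which has density $\tfrac{\eps}{2\Delta_1}\exp(-\eps|z|/\Delta_1)$, the density of $\mathcal{M}$ on neighboring databases $D$ and $D'$ factorizes across the $k$ coordinates. Writing these densities as $p_D$ and $p_{D'}$, for any point $\bz\in\R^k$ I would form the ratio
$$
\frac{p_D(\bz)}{p_{D'}(\bz)}=\prod_{i=1}^k \exp\!\left(\frac{\eps\bigl(|f(D')_i-z_i|-|f(D)_i-z_i|\bigr)}{\Delta_1}\right).
$$

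The key step is a coordinatewise triangle inequality: $|f(D')_i-z_i|-|f(D)_i-z_i|\le |f(D)_i-f(D')_i|$. Summing the exponents collapses the product into $\exp\!\left(\eps\,\ellnorm{f(D)-f(D')}{1}/\Delta_1\right)$. Because $D$ and $D'$ are neighboring and $\Delta_1$ is the $\ell_1$-sensitivity of $f$, we have $\ellnorm{f(D)-f(D')}{1}\le\Delta_1$, so the entire ratio is bounded above by $e^{\eps}$ uniformly in $\bz$.

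Finally I would integrate this pointwise density bound over an arbitrary measurable output set $\mathcal{S}$: since $p_D(\bz)\le e^{\eps}p_{D'}(\bz)$ holds for every $\bz$, integration yields $\Pr[\mathcal{M}(D)\in\mathcal{S}]\le e^{\eps}\Pr[\mathcal{M}(D')\in\mathcal{S}]$, which is exactly the $(\eps,0)$-differential privacy guarantee (the additive $\delta$ term is zero). I do not anticipate a genuine obstacle: the argument is a routine density-ratio computation. The only point requiring care is that the sensitivity must be measured in the $\ell_1$ norm so as to match the per-coordinate Laplace scale $\Delta_1/\eps$, and it is precisely the triangle-inequality step that makes $\ell_1$ the correct notion here.
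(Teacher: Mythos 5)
Your proof is correct: the pointwise density-ratio bound via the coordinatewise triangle inequality, followed by integration over an arbitrary output set, is exactly the standard argument establishing $(\eps,0)$-DP for the Laplace mechanism. The paper itself states this proposition as an imported preliminary from \cite{DworkR14} without proof, and your argument coincides with the proof in that cited reference (Theorem 3.6 of Dwork--Roth), so there is nothing to reconcile.
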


\begin{proposition}[Gaussian Mechanism~\cite{BalleW18}]
\label{lem:gaussian_mechanism}
		If $f:\cD \to \mathbb{R}^d$ is a function with $\ell_2$-sensitivity $\Delta_2$. For any $\eps\ge 0$ and $\delta\in [0,1]$, the Gaussian output perturbation mechanism $M(x)=f(x)+Z$ with $Z\sim \cN(0,\sigma^2 I)$ is $(\eps,\delta)$-DP if and only if $$\Phi\left(\frac{\Delta_2}{2\sigma}-\frac{\eps\sigma}{\Delta_2}\right)-e^\eps\Phi\left(-\frac{\Delta_2}{2\sigma}-\frac{\eps\sigma}{\Delta_2}\right)\le \delta.$$ 
\end{proposition}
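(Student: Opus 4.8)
The plan is to reduce the claim to an exact computation of the optimal privacy profile of a pair of Gaussians. Recall that for any mechanism $M$ and neighbors $D\sim D'$, the smallest $\delta$ for which $\Pr[M(D)\in S]\le e^\eps\Pr[M(D')\in S]+\delta$ holds for all measurable $S$ is attained by choosing $S$ to be the likelihood-ratio super-level set. Writing $L_{D,D'}(y)=\ln\frac{p_D(y)}{p_{D'}(y)}$ for the privacy-loss random variable, where $p_D$ denotes the density of $M(D)$, this optimal value equals
$$\delta_{D,D'}(\eps) = \Pr_{y\sim M(D)}[L_{D,D'}(y)>\eps] - e^\eps\,\Pr_{y\sim M(D')}[L_{D,D'}(y)>\eps].$$
Thus $M$ is $(\eps,\delta)$-DP exactly when $\sup_{D\sim D'}\delta_{D,D'}(\eps)\le\delta$, and the entire task becomes evaluating this supremum for the Gaussian mechanism.

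First I would exploit the rotational invariance of the isotropic Gaussian $\cN(0,\sigma^2 I)$. Setting $v=f(D')-f(D)$ and $d=\elltwo{v}\le\Delta_2$, the log-likelihood ratio $L_{D,D'}(y)$ depends on $y$ only through the scalar $\inpro{y}{v}$, so an orthogonal change of coordinates collapses the problem to the one-dimensional pair $\cN(0,\sigma^2)$ versus $\cN(d,\sigma^2)$. Expanding the Gaussian densities then reveals that $L_{D,D'}$ is itself Gaussian: under $M(D)$ it is distributed as $\cN\!\left(\tfrac{d^2}{2\sigma^2},\tfrac{d^2}{\sigma^2}\right)$, and under $M(D')$ as $\cN\!\left(-\tfrac{d^2}{2\sigma^2},\tfrac{d^2}{\sigma^2}\right)$.

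With these two Gaussian laws in hand, evaluating the two tail probabilities at the threshold $\eps$ is a routine standardization: the first term becomes $\Phi\!\left(\tfrac{d}{2\sigma}-\tfrac{\eps\sigma}{d}\right)$ and the second $\Phi\!\left(-\tfrac{d}{2\sigma}-\tfrac{\eps\sigma}{d}\right)$, so that $\delta_{D,D'}(\eps)=g(d)$ for the function $g(d)=\Phi(u)-e^\eps\Phi(v)$ with $u=\tfrac{d}{2\sigma}-\tfrac{\eps\sigma}{d}$ and $v=-\tfrac{d}{2\sigma}-\tfrac{\eps\sigma}{d}$. Substituting $d=\Delta_2$ recovers exactly the expression in the statement, so it only remains to verify that the worst neighboring pair is the one of maximal sensitivity.

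The last and most delicate step is therefore to show that $g$ is maximized at the largest shift $d=\Delta_2$. The key observation I would use is the algebraic identity $v^2=u^2+2\eps$, which forces $e^\eps\phi(v)=\phi(u)$ for the standard normal density $\phi$; differentiating $g$ and invoking this identity collapses the two density terms and leaves $g'(d)=\phi(u)/\sigma>0$. Hence $g$ is strictly increasing, and since the distances $d$ range over a set with supremum $\Delta_2$, continuity gives $\sup_{D\sim D'}\delta_{D,D'}(\eps)=g(\Delta_2)$, yielding the claimed characterization in both directions. I expect this monotonicity argument to be the crux: without the identity $e^\eps\phi(v)=\phi(u)$ the sign of $g'$ is genuinely unclear, because increasing $d$ drives the two $\Phi$-arguments in opposite directions.
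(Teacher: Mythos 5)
The paper does not prove this proposition; it imports it verbatim from Balle and Wang \cite{BalleW18}, so there is no in-paper argument to compare against. Your reconstruction is correct and is essentially the proof given in that reference: reduce the optimal $\delta$ to the privacy-loss tail expression $\Pr[L>\eps]-e^\eps\Pr[L'>\eps]$, collapse to one dimension by rotational invariance, observe that $L$ is Gaussian with mean $\pm d^2/2\sigma^2$ and variance $d^2/\sigma^2$, and use the identity $e^\eps\phi(v)=\phi(u)$ (from $v^2=u^2+2\eps$) to show monotonicity in $d$, which is indeed the crux. The only point worth making explicit is that $(\eps,\delta)$-DP requires the tail bound for both ordered pairs $(D,D')$ and $(D',D)$; for the Gaussian mechanism the two give the same value of $\delta_{D,D'}(\eps)$ by symmetry of the shift, so nothing is lost, but the sentence deserves to be said.
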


\begin{definition}
	We say that two distributions $P,Q$ on a domain $\Omega$ are $(\eps,\delta)$-close to each other, denoted by $P\approx_{\eps,\delta} Q$, if for every $S\subset \Omega$, we have
	\begin{enumerate}
		\item  $\Pr_{x\sim P}[x\in S]\le e^\eps \Pr_{x\sim Q}[x\in S]+\delta$ and
		\item  $\Pr_{x\sim Q}[x\in S]\le e^\eps \Pr_{x\sim P}[x\in S]+\delta.$
	\end{enumerate}
	We say that two random variables $X,Y$ are $(\eps,\delta)$-close to each other, denoted by $X\approx_{\eps,\delta} Y$, if their distributions are $(\eps,\delta)$-close to each other.
\end{definition}

We will need the following lemmas which are useful to prove $(\eps,\delta)$-DP.

\begin{lemma}
	\label{lem:eps_delta_conditioning}
	Let $P,Q$ be probability distributions over a domain $X$. If there exists an event $E$ s.t. $P[E]=1-\delta'$ and $P|_E \approx_{\eps,\delta} Q$, then $P \approx_{\eps,\delta+\delta'}Q$.
\end{lemma}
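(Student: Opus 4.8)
The plan is to unfold the conclusion $P\approx_{\eps,\delta+\delta'}Q$ into its two defining one-sided inequalities, namely $P(S)\le e^\eps Q(S)+\delta+\delta'$ and $Q(S)\le e^\eps P(S)+\delta+\delta'$ for every $S\subseteq X$ (where I abbreviate $P(S)=\Pr_{x\sim P}[x\in S]$), and to derive each from the hypothesis $P|_E\approx_{\eps,\delta}Q$. The starting point is the decomposition $P(S)=P(S\cap E)+P(S\cap E^c)$ together with the two facts $P(S\cap E)=(1-\delta')\,P|_E(S)$ (since $P(E)=1-\delta'$) and $0\le P(S\cap E^c)\le P(E^c)=\delta'$. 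These let me pass between $P(S)$ and $P|_E(S)$, losing only $\delta'$ additively on one side and a factor $(1-\delta')$ on the other.

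The first inequality is immediate: from the decomposition, $P(S)\le (1-\delta')\,P|_E(S)+\delta'\le P|_E(S)+\delta'$, and then the hypothesis $P|_E(S)\le e^\eps Q(S)+\delta$ yields $P(S)\le e^\eps Q(S)+\delta+\delta'$.

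The second inequality is where I expect the main obstacle. The naive route—substituting $P|_E(S)\le P(S)/(1-\delta')$ into the hypothesis $Q(S)\le e^\eps P|_E(S)+\delta$—only gives $Q(S)\le e^\eps P(S)/(1-\delta')+\delta$, i.e. a spurious factor $1/(1-\delta')$ passing through the $e^\eps$ term (equivalently a bound of the form $\delta+e^\eps\delta'$), which is weaker than claimed. The fix is to never divide by $1-\delta'$ and instead use the trivial bound $Q(S)\le 1$ to control the regime where $P|_E(S)$ is large. Concretely, I would start from $P(S)\ge P(S\cap E)=(1-\delta')\,P|_E(S)$, so that $Q(S)-e^\eps P(S)\le Q(S)-(1-\delta')\,e^\eps P|_E(S)$; then invoke the hypothesis in the form $e^\eps P|_E(S)\ge Q(S)-\delta$ (valid to multiply by the nonnegative factor $1-\delta'$) to obtain $Q(S)-e^\eps P(S)\le Q(S)-(1-\delta')(Q(S)-\delta)=\delta' Q(S)+(1-\delta')\delta$. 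Finally $Q(S)\le 1$ and $(1-\delta')\delta\le\delta$ bound this by $\delta+\delta'$ (in fact by $\delta+\delta'-\delta\delta'$).

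Combining the two inequalities, both defining conditions of $P\approx_{\eps,\delta+\delta'}Q$ hold for every $S$, which finishes the proof. The only genuinely nontrivial point is the one flagged above: the normalization $1/(1-\delta')$ of the conditional distribution $P|_E$ must not be routed through the multiplicative $e^\eps$ factor, and the clean way to avoid it is to combine the $(\eps,\delta)$-closeness of $P|_E$ and $Q$ with the trivial bound $Q(S)\le 1$.
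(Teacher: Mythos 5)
Your proof is correct, and your first inequality is identical to the paper's. The second inequality is where you diverge, and it is worth noting that the paper takes exactly the route you dismiss as ``naive'': it writes $Q(S)\le e^\eps P|_E(S)+\delta\le e^\eps P(S)/(1-\delta')+\delta$ and then repairs the spurious factor by a case split --- if $e^\eps P(S)\le 1-\delta'$, the excess term $\delta'\cdot e^\eps P(S)/(1-\delta')$ is at most $\delta'$ and the bound follows; otherwise $Q(S)\le 1\le e^\eps P(S)+\delta'+\delta$ holds trivially. So dividing by $1-\delta'$ is not actually fatal; the trivial bound $Q(S)\le 1$ rescues that route just as it rescues yours. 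Your variant instead multiplies the hypothesis $e^\eps P|_E(S)\ge Q(S)-\delta$ by the nonnegative factor $1-\delta'$ and uses $Q(S)\le 1$ uniformly, which buys you a single case-free chain of inequalities and the marginally sharper constant $\delta+\delta'-\delta\delta'$; the paper's version buys directness (it applies the hypothesis in its stated form and pays with a two-case finish). Both proofs rest on the same two ingredients --- the decomposition of $P$ over $E$ and its complement, and the trivial bound $Q(S)\le 1$ --- so this is a difference of algebraic packaging rather than of substance, but your packaging is arguably the cleaner one.
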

\begin{proof}
	Fix some subset $S\subseteq X$.
	\begin{align*}
		\Pr_{x\sim P}[x\in S] &= P[\bar E] \Pr_{x\sim P}[x\in S|\bar E] + P[E] \Pr_{x\sim P}[x\in S|E]\\
		&\le P[\bar E]+\Pr_{x\sim P}[x\in S|E]\\
		&= \delta'+\Pr_{x\sim P|_E}[x\in S]\\
		&\le \delta'+e^\eps \Pr_{x\sim Q}[x\in S]+\delta
	\end{align*}
	We now prove the other direction.
	\begin{align*}
		\Pr_{x\sim Q}[x\in S] &\le e^\eps \Pr_{x\sim P|_E}[x\in S]+\delta\\
		&\le e^\eps \frac{\Pr_{x\sim P}[x\in S]}{P(E)}+\delta\\
		&= e^\eps \frac{\Pr_{x\sim P}[x\in S]}{1-\delta'}+\delta\\
		&= e^\eps\Pr_{x\sim P}[x\in S] +\delta'\left(\frac{e^\eps\Pr_{x\sim P}[x\in S]}{1-\delta'}\right)+\delta
	\end{align*}
	Now if $e^\eps\Pr_{x\sim P}[x\in S]\le 1-\delta'$, then we have $Pr_{x\sim Q}[x\in S] \le e^\eps\Pr_{x\sim P}[x\in S]+\delta'+\delta.$
	Otherwise, trivially $$\Pr_{x\sim Q}[x\in S]\le 1 \le e^\eps\Pr_{x\sim P}[x\in S]+\delta'+\delta.$$
\end{proof}

We will also need the fact that if $X\approx_{\eps,\delta} Y$, then after post-processing they also remain $(\eps,\delta)$-close.
\begin{lemma}[\cite{DR14}]
	\label{lem:post_processing_eps_delta}
	If two random variables $X,Y$ are $(\eps,\delta)$-close and $M$ is any randomized algorithm, then $M(X) \approx_{\eps,\delta} M(Y)$.
\end{lemma}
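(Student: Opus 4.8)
The plan is to prove the post-processing inequality in two stages: first for a deterministic $M$, and then to bootstrap to a genuinely randomized $M$ by averaging over its internal coins. Write $P,Q$ for the distributions of $X$ and $Y$ on the source space, and fix an arbitrary (measurable) output set $T$ in the range of $M$. The quantities to be compared are $\Pr[M(X)\in T]$ and $\Pr[M(Y)\in T]$, and by the symmetry of the definition of $\approx_{\eps,\delta}$ it suffices to establish one of the two required inequalities, say $\Pr[M(X)\in T]\le e^\eps\Pr[M(Y)\in T]+\delta$.

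First I would dispose of the deterministic case. If $M$ is a deterministic map, then the event $\{M(x)\in T\}$ coincides with $\{x\in M^{-1}(T)\}$, so taking $S=M^{-1}(T)$ in the hypothesis $X\approx_{\eps,\delta}Y$ gives
$$\Pr[M(X)\in T]=\Pr_{x\sim P}[x\in S]\le e^\eps\Pr_{x\sim Q}[x\in S]+\delta=e^\eps\Pr[M(Y)\in T]+\delta,$$
and the reverse inequality follows identically from the second clause of the definition.

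For a randomized $M$, I would model its internal randomness as a coin string $R\sim\mu$ drawn independently of $X$ and $Y$, so that conditioned on $R=r$ the map $M_r:=M(\cdot\,;r)$ is deterministic. Then by independence and conditioning on $R$,
$$\Pr[M(X)\in T]=\int\Pr[M_r(X)\in T]\,d\mu(r)\le\int\bigl(e^\eps\Pr[M_r(Y)\in T]+\delta\bigr)\,d\mu(r),$$
where the inequality applies the deterministic case to each fixed $r$. Since $\mu$ is a probability measure, the additive term integrates to $\delta$ and the factor $e^\eps$ pulls out of the integral, leaving $e^\eps\Pr[M(Y)\in T]+\delta$; the opposite inequality is symmetric.

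The only delicate point is the measure-theoretic setup: representing a randomized algorithm as a mixture of deterministic maps (equivalently, as a Markov kernel $K(x,\cdot)$) and justifying the interchange of the $r$-integration with the probability computation, which is a routine application of Fubini--Tonelli since all integrands are nonnegative and bounded. Everything else is bookkeeping. The key structural reason the guarantee survives post-processing is that averaging a family of $(\eps,\delta)$-bounds against a distribution over coins leaves both the multiplicative constant $e^\eps$ and the additive slack $\delta$ untouched, precisely because the coin distribution $\mu$ has total mass one.
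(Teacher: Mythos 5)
Your proposal is correct and follows essentially the same route as the paper: the paper also writes $M(z)=F(z,R)$, conditions on the coin string $R=r$, applies the $(\eps,\delta)$-closeness of $X$ and $Y$ to the preimage event $\{z: F(z,r)\in S\}$ for each fixed $r$, and then averages over $r$, which is exactly your two-stage deterministic-then-mixture argument. The only cosmetic difference is that you isolate the deterministic case as a separate first step and phrase the averaging as an integral against a general coin measure $\mu$ rather than a sum, which adds a little generality but changes nothing substantive.
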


\section{Contractivity implies Bounded Sensitivity}
\label{sec:contractivity}

In this section, we prove Theorem~\ref{thm:contractivity_implies_sensitivity} which claims that if an update policy satisfies contractive property as in Definition \ref{def:ellp_contractive_policy}, then it implies bounded sensitivity of the histogram built by Algorithm~\ref{alg:meta_histogram}. This in turn implies a DPSU algorithm by Theorem~\ref{thm:sensitivity_implies_DP_informal}. 


\begin{proof}[Proof of Theorem~\ref{thm:contractivity_implies_sensitivity}]
Let $\phi$ be an $\ell_p$-contractive update policy with invariant subset $\cI.$
Consider two neighboring databases $D_1$ and $D_2$ where $D_1$ has one extra user compared to $D_2$. Let $H_1$ and $H_2$ denote the histograms built by Algorithm \ref{alg:meta} using the update policy $\phi$ when the databases are $D_1$ and $D_2$ respectively.

  Say the extra user in $D_1$ has position $t$ in the global ordering given by the hash function. Let $H_1^{t-1}$ and $H_2^{t-1}$ be the histograms after the first $t-1$ (according to the global order given by the hash function $\textsf{hash}$) users' data is added to the histogram. Therefore $H_1^{t-1}=H_2^{t-1}.$ And the new user updates $H_1^{t-1}$ to $H_1^{t}$. By property (2) in Definition~\ref{def:ellp_contractive_policy} of $\ell_p$-contractive policy, $(\phi(H_1^{t-1})),H_1^{t-1})\in \cI$. Since $\phi(H_1^{t-1})=H_1^t$, we have $(H_1^{t},H_1^{t-1})=(H_1^{t},H_2^{t-1})\in \cI.$ The remaining users are now added to $H_1^{t},H_2^{t-1}$ in the same order. Note that we are using the fact that the users are sorted according some hash function and they contribute in that order (this is also needed to claim that $H_1^{t-1}=H_2^{t-1}$). Therefore, by property (1) in Definition~\ref{def:ellp_contractive_policy} of $\ell_p$-contractive policy, we get $(H_1,H_2)\in \cI$. Since $\cI$ only contains pairs with $\ell_p$-distance at most 1, we have $\ellnorm{H_1-H_2}{p}\le 1$. Therefore the histogram built by Algorithm~\ref{alg:meta_histogram} using $\phi$ has $\ell_p$-sensitivity of at most 1.
\end{proof}
Above theorem implies that once we have a $\ell_p$ contractive update policy, we can appeal to Theorem \ref{thm:sensitivity_implies_DP_informal} to design an algorithm for DPSU.

\section{Policy Laplace algorithm} 
\label{sec:policylaplace}
 In this section we will present a DPSU algorithm called \textsc{Policy Laplace} which uses any symmetric $\ell_1$-contractive update policy. An update policy is called \emph{symmetric} if it updates items with equal weights by equal amounts. Later, in Section~\ref{sec:ell1-descent-ell1-contractivity}, we present a specific symmetric $\ell_1$-contractive update policy called $\ell_1$-descent (Algorithm~\ref{alg:ell1-descent-ell1-contractive}). We can also use contractive update policies which are not symmetric with a small increase in the threshold $\rho$, see Appendix~\ref{sec:sensitivity_implies_DP}.

 The \textsc{Policy Laplace} algorithm is described in Algorithm~\ref{alg:policy-laplace}. The cutoff parameter $\Gamma$ will be used in the update policy (Algorithm~\ref{alg:ell1-descent-ell1-contractive}). Intuitively, the update policy will stop increasing weights of items whose weights reach a cutoff $\Gamma.$ Since the added noise is $\Lap(0,\lambda)$, which is centered at 0, we want to set the cutoff $\Gamma$ in the update policy to be sufficiently above the threshold $\rho$. Thus we pick $\Gamma = \rho_{\Lap} + \alpha \cdot \lambda$ for some $\alpha>0$. From our experiments, choosing $\alpha\in [2,6]$ works best empirically. The parameters $\lambda,\rho_\Lap$ are set so as to achieve $(\epsilon,\delta)$-DP as shown in Theorem~\ref{thm:policy-laplace}. 

\begin{algorithm}[!h]
 \caption{\textsc{Policy Laplace} algorithm for DPSU}
 \label{alg:policy-laplace}
 \begin{algorithmic}
 \STATE {\bfseries Input:} $D$: Database of $n$ users where each user has some subset $W\subset U$\\
 $\Delta_0$: maximum contribution parameter\\
 $(\eps,\delta)$: privacy parameters\\ 
 $\alpha$: parameter for setting cutoff 
\STATE {\bfseries Output:} S: A subset of $\cup_i W_i$ 
\STATE $\lambda\leftarrow 1/\eps$ \COMMENTALG{Noise parameter in $\Lap(0,\lambda)$}
\STATE // Threshold parameter
\STATE $\rho_{\Lap} \leftarrow \max_{1\le t\le \Delta_0} \frac{1}{t}+\frac{1}{\epsilon}\log\left(\frac{1}{2\left(1-(1-\delta)^{1/t}\right)}\right)$
\STATE $\Gamma\leftarrow \rho_{\Lap}+ \alpha\cdot \lambda$ \COMMENTALG{Cutoff parameter for update policy}
\STATE Run Algorithm~\ref{alg:meta} with $\textsf{Noise}\sim \Lap(0,\lambda)$ and any symmetric $\ell_1$-contractive update policy (such as Algorithm~\ref{alg:ell1-descent-ell1-contractive} with cutoff parameter $\Gamma$) to output $S$.
\end{algorithmic}
\end{algorithm}

\subsection{Privacy analysis of \textsc{Policy Laplace}}
In this section, we will prove that the \textsc{Policy Laplace} algorithm (Algorithm~\ref{alg:policy-laplace}) satifies $(\eps,\delta)$-DP. By Theorem~\ref{thm:contractivity_implies_sensitivity} and Theorem~\ref{thm:sensitivity_implies_DP_informal}, we already have an intuitive path to prove privacy. 

We now state the privacy claims formally.
\begin{theorem}
\label{thm:policy-laplace}
The \textsc{Policy Laplace} algorithm (Algorithm \ref{alg:policy-laplace}) is $(\epsilon, \delta)$-$DP$ when $$\rho_{\Lap} \geq \max_{1\le t\le \Delta_0} \frac{1}{t}+\frac{1}{\epsilon}\log\left(\frac{1}{2\left(1-(1-\delta)^{1/t}\right)}\right).$$
\end{theorem}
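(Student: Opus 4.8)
The plan is to fix a pair of neighboring databases, say $D_1$ which has one extra user relative to $D_2$, and compare the output distributions $P$ of $A(D_1)$ and $Q$ of $A(D_2)$, aiming to show $P\approx_{\eps,\delta}Q$ (which is exactly the two-sided $(\eps,\delta)$-DP guarantee, and by symmetry of neighboring pairs covers both directions). By Lemma~\ref{lem:policy-laplace-ell1-sensitivity} we already know $H_1\ge H_2$ and $\norm{H_1-H_2}_{\ell_1}\le 1$ for the two histograms. The natural split is into \emph{common} items $U_2=\supp(H_2)$ and \emph{new} items $T=\supp(H_1)\setminus\supp(H_2)$; since $H_1\ge H_2\ge 0$ we have $\supp(H_2)\subseteq\supp(H_1)$, so $A(D_2)$ can only ever output common items, whereas $A(D_1)$ can additionally output items of $T$.

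First I would dispatch the common items. Restricting both histograms to $U_2$ keeps the $\ell_1$ distance at most $1$, so adding $\Lap(0,1/\eps)$ coordinate-wise yields an $\eps$-DP noisy vector by the Laplace mechanism; thresholding is post-processing, so by Lemma~\ref{lem:post_processing_eps_delta} the induced distributions over subsets of $U_2$ are $(\eps,0)$-close. Formally I would introduce the event $E=$ ``$A(D_1)$ outputs no item of $T$'', observe that the noise on $T$ is independent of the noise on $U_2$, and conclude that $P|_E$ (which is supported on subsets of $U_2$) equals, in distribution, the thresholded noisy version of $H_1|_{U_2}$; hence $P|_E\approx_{\eps,0}Q$.

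The crux is then to show $\Pr_P[E]\ge 1-\delta$, after which Lemma~\ref{lem:eps_delta_conditioning} immediately gives $P\approx_{\eps,\delta}Q$. This is where the threshold formula and the structure of \textsc{$\ell_1$-descent} enter. I would first argue that every new item $u\in T$ has $H_2[u]=0$, which forces $u\notin W_j'$ for every non-extra user $j$ (otherwise the policy would raise its weight above $0$); consequently the common users preceding the extra user leave $u$ at weight $0$, and in $D_1$ the entire weight $H_1[u]$ is contributed by the extra user alone, who sees $u$ at level $0$ when acting. Because \textsc{$\ell_1$-descent} fills coordinates starting from the one closest to the cutoff and the new items all begin at $0$ (the farthest level), the leftover budget is spread equally among them; thus every $u\in T$ receives the \emph{same} weight $w$, and since the extra user's total $\ell_1$ budget is $1$ we get $tw\le 1$, i.e.\ $w\le 1/t$ where $t=|T|\le\Delta_0$. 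Using independence of the Laplace noise across items and the tail bound $\Pr[\Lap(0,1/\eps)>x]=\tfrac12 e^{-\eps x}$ for $x\ge 0$, the probability that no new item crosses the threshold is $(1-\tfrac12 e^{-\eps(\rho_{\Lap}-w)})^t\ge(1-\tfrac12 e^{-\eps(\rho_{\Lap}-1/t)})^t$; the stated lower bound on $\rho_{\Lap}$ is exactly what forces $\tfrac12 e^{-\eps(\rho_{\Lap}-1/t)}\le 1-(1-\delta)^{1/t}$, so this product is at least $\big((1-\delta)^{1/t}\big)^t=1-\delta$, with the maximum over $1\le t\le\Delta_0$ covering every possible number of new items.

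I expect the main obstacle to be the new-item argument rather than the common-item one: proving that new items all receive equal weight bounded by $1/t$ requires careful use of the precise dynamics of \textsc{$\ell_1$-descent}---that a coordinate untouched by everyone but the extra user stays at $0$, and that equal starting levels produce equal ending levels. Everything downstream---the Laplace tail computation, its inversion to recover the threshold, and the assembly via Lemmas~\ref{lem:eps_delta_conditioning} and~\ref{lem:post_processing_eps_delta}---is then routine.
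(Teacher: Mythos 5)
Your proposal is correct and follows essentially the same route as the paper: condition on the event $E$ that no item of $T=\supp(H_1)\setminus\supp(H_2)$ is released, handle the common items via the $\ell_1$-sensitivity bound plus the Laplace mechanism and post-processing (Lemma~\ref{lem:post_processing_eps_delta}), bound $\Pr[\bar E]\le\delta$ via the Laplace tail with $H_1[u]\le 1/|T|$, and assemble with Lemma~\ref{lem:eps_delta_conditioning}. You even justify the bound $H_1[u]\le 1/|T|$ (which the paper asserts without proof) correctly via the equal-rate dynamics of \textsc{$\ell_1$-descent}, modulo a slightly loose phrasing of the policy as ``filling from the closest coordinate'' rather than raising all below-cutoff coordinates simultaneously.
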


 \begin{proof}
 Suppose $D_1$ and $D_2$ are neighboring databases where $D_1$ has one extra user compared to $D_2$. Let $P$ and $Q$ denote the distribution of output of the algorithm when the database is $D_1$ and $D_2$ respectively.
 We want to show that $P\approx_{\eps,\delta} Q$. It is enough to prove this for any fixed choice of $W_i'\subset W_i$ (in Algorithm~\ref{alg:meta_histogram}) identical in both instances, which corresponds to a coupling. Let $E$ be the event that the final output $A\subset \supp(H_2).$ 
 \begin{claim}
 	$P|_E \approx_{\eps,0} Q$
 \end{claim}
 \begin{proof}
 	Let $H_1$ and $H_2$ be the histograms generated by the algorithm from databases $D_1$ and $D_2$ respectively. And $\hat H_1$ and $\hat H_2$ be the histograms obtained by adding $\Lap(0,1/\eps)$ noise to each entry of $H_1$ and $H_2$ respectively. For any possible output $A$ of Algorithm~\ref{alg:policy-laplace}, we have
 	$$Q(A)=\Pr[A=\set{u\in \supp(H_2): \hat H_2[u]>\rho_{\Lap}}] \text{ and } P|_E(A)=\Pr[A=\set{u\in \supp(H_2): \hat H_1[u]>\rho_{\Lap}}].$$
 	So $A\sim P|_E$ is obtained by post-processing $\hat H_1|_E$ and $A\sim Q$ is obtained by post-processing $\hat H_2$.
 	Since post-processing only makes two distributions closer (Lemma~\ref{lem:post_processing_eps_delta}), it is enough to show that the distributions of the $\hat H_1|_{\supp(H_2)}$ and $\hat H_2$ are $(\eps,0)$-close to each other. By Theorem~\ref{thm:contractivity_implies_sensitivity}, $H_1|_{\supp(H_2)}$ and $H_2$ differ in $\ell_1$-distance by at most 1. Therefore $P|_E \approx_{\eps,0} Q$ by the properties of Laplace mechanism (see Theorem 3.6 in \cite{DR14}).
 \end{proof}
 By Lemma~\ref{lem:eps_delta_conditioning}, it is enough to show that $P(E)\ge 1-\delta$. Let $T=\supp(H_1)\setminus \supp(H_2).$ Note that $|T|\le \Delta_0$ and $H_1[u]\le \frac{1}{|T|}$ for $u\in T$ since the update policy is symmetric.

\begin{align}
    P(\bar E) &= \Pr[\exists u \in T\ |\ \hat{H_1}[u] > \rho_{\Lap}] \nonumber \\
    &=1-\Pr[\forall u\in T\ \ \hat{H_1}[u] \le \rho_{\Lap}] \nonumber \\
    &= 1-\prod_{u\in T}\Pr[H_1[u] + X_u \le \rho_{\Lap}] \tag{$X_u \sim \Lap(1/\eps)$} \nonumber \\
    &\le 1-\prod_{u\in T}\Pr\left[X_u \le \rho_{\Lap} - \frac{1}{|T|}\right]  \tag{$H_1[u]\le \frac{1}{|T|}$ for $u\in T$}\nonumber \\
    &= 1-\left(1-\frac{1}{2}\exp\left(-\eps \rho_\Lap + \eps \frac{1}{|T|}\right)\right)^{|T|} \label{eqn:union_bound_laplace}
\end{align}
 Thus for $$\rho_{\Lap} \geq \max_{1\le t\le \Delta_0} \frac{1}{t}+\frac{1}{\epsilon}\log\left(\frac{1}{2\left(1-(1-\delta)^{1/t}\right)}\right),$$ we have $P(\bar E)\le \delta$. Therefore the \textsc{Policy Laplace} algorithm (Algorithm \ref{alg:policy-laplace}) is $(\eps,\delta)$-DP.
 \end{proof}

 \subsection{\textsc{$\ell_1$-descent} update policy for $\ell_1$-contractivity} 
 \label{sec:ell1-descent-ell1-contractivity}
We will now describe a specific $\ell_1$-contractive policy called \textsc{$\ell_1$-descent}. The policy is described in Algorithm~\ref{alg:ell1-descent-ell1-contractive}.
 We will set some \emph{cutoff} $\Gamma$ above the threshold $\rho$ to use in the update policy. Once the weight of an item ($H[u]$) crosses the cutoff, we do not want to increase it further. In this policy, each user starts with a budget of 1. The user uniformly increases $H[u]$ for each $u\in W$ s.t. $H[u]<\Gamma$. Once some item's weight reaches $\Gamma,$ the user stops increasing that item and keeps increasing the rest of the items uniformly until the budget of 1 is expended.

This policy can also be interpreted as \emph{gradient descent} to minimize the $\ell_1$-distance between the current weighted histogram and the point $(\Gamma,\Gamma,\dots,\Gamma)$, hence the name \textsc{$\ell_1$-descent}. Since the gradient vector is 1 in coordinates where the weight is below cutoff $\Gamma$ and $0$ in coordinates where the weight is $\Gamma,$ the \textsc{$\ell_1$-descent} policy is moving in the direction of the gradient until it has moved a total $\ell_1$-distance of at most 1.

\begin{algorithm}[H]
 \caption{\textsc{$\ell_1$-descent} update policy for $\ell_1$-contractivity}
 \label{alg:ell1-descent-ell1-contractive}
 \begin{algorithmic}
 \STATE {\bfseries Input:} $H_0$: Current histogram \\
 $W$: A subset of $U$ of size at most $\Delta_0$\\
$\Gamma$: cutoff parameter
\STATE {\bfseries Output:} $H_1$: Updated histogram
\STATE
\STATE $H_1|_{U\setminus W} \leftarrow H_0|_{U\setminus W}$
\STATE $G\leftarrow (\Gamma,\Gamma,\dots,\Gamma)-H_0|_W$ 
\IF {$\norm{G}_{\ell_1} \le 1$}
	\STATE $H_1|_W \leftarrow H_0|_W$
\ELSE
	\STATE Find $\lambda \ge 0$ such that $\sum_{u\in W} \min\{G[u],\lambda\} = 1$
	\STATE $H_1[u] \leftarrow H_0[u] + \min\{G[u],\lambda\} \ \forall u\in W$
\ENDIF
\end{algorithmic}
\end{algorithm}

We will now prove that the \textsc{$\ell_1$-descent} policy in Algorithm~\ref{alg:ell1-descent-ell1-contractive} is indeed $\ell_1$-contractive. For two histograms $G_1,G_2$, we write $G_1 \ge G_2$ if $G_1[u] \ge G_2[u]$ for each every item $u$. $G_1\le G_2$ is defined similarly. 
\begin{lemma}
	\label{lem:contraction_ell1}
	Let $\cI=\{(G_1,G_2):G_1\ge G_2,\ \ellnorm{G_1-G_2}{1}\le 1\}$. Then \textsc{$\ell_1$-descent} update policy in Algorithm~\ref{alg:ell1-descent-ell1-contractive} is $\ell_1$-contractive over the invariant subset $\cI.$
\end{lemma}
\begin{proof}
	Let $\phi$ denote the \textsc{$\ell_1$-descent} update policy.

	We will first show property (2) of Definition~\ref{def:ellp_contractive_policy}. Let $G$ be any weighted histogram and let $G'=\phi(G)$. Clearly $G'\ge G$ as the new user will never decrease the weight of any item. Moreover, the total change to the histogram is at most $1$ in $\ell_1$-distance. Therefore $\ellnorm{G'-G}{1}\le 1.$ Therefore $(G',G)\in \cI.$

	We will now prove property (1) of Definition~\ref{def:ellp_contractive_policy}. Let $(G_1,G_2)\in \cI$, i.e., $G_1\ge G_2$ and $\ellnorm{G_1-G_2}{1}\le 1$. Let $G_1'=\phi(G_1), G_2'=\phi(G_2).$
	A new user can increase $G_1$ and $G_2$ by at most 1 in $\ell_1$ distance. Let $\Gamma$ be the cutoff parameter in Algorithm~\ref{alg:ell1-descent-ell1-contractive}. Let $S$ be the set of $\Delta_0$ items with the new user, therefore only the items in $S$ will change in $G_1',G_2'$. WLOG, we can assume that the user changes both $G_1$ and $G_2$ by exactly total $\ell_1$ distance of 1. Otherwise, in at least one of them all the items in $S$ should reach the cutoff $\Gamma$. If this happens with $G_1,$ then clearly $\Gamma=G_1'[u]\ge G_2'[u]$ for all $u\in S$. But it is easy to see that if this happens with $G_2$, then it should also happen with $G_1$ in which case $G_1'[u]=G_2'[u]=\Gamma$ for $u\in S.$ 

	Imagine that at time $t=0$, the user starts pushing mass continuously at a rate of 1 to both $G_1,G_2$ until the entire mass of $1$ is sent, which happens at time $t=1$. The mass flow is equally split among all the items which haven't yet crossed cutoff. Let $G_1^t$ and $G_2^t$ be the histograms at time $t\in [0,1]$ as mass is pushed constinuously at a constrant rate. Therefore, for $i=1,2$, $G_i^0=G_i$ and $G_i^1=G_i'$. We claim that $G_1^t \ge G_2^t$ implies that $\frac{d G_1^t[u]}{dt}\ge \frac{d G_2^t[u]}{dt}$ for all $u\in S$ s.t. $G_1^t[u]<\Gamma$. This is because the flow is split equally among items which didn't cross the cutoff, and there can only be more items in $G_2^t$ which didn't cross the the cutoff when compared to $G_1^t$. And at time $t=0$, we have $G_1^0\ge G_2^0$. Therefore, we have $G_1^t \ge G_2^t$ for all $t\in [0,1]$ and so $G_1'\ge G_2'.$

	We will now prove $\ell_1$-contraction. Let $C_i=\norm{G_i-G_i'}_{\ell_1}$. By the discussion above, $C_1\le C_2$ (either total mass flow is equal to 1 for both or all items in $S$ will reach cutoff $\Gamma$ in $G_1$ before this happens in $G_2$).
	\begin{align*}
		&\norm{G_1'-G_2'}_{\ell_1} \\
		&= \sum_{u \in S} G_1'[u] - \sum_{u\in S} G_2'[u] \tag{Since $G_1'\ge G_2'$}\\
		&= \sum_{u \in S} G_1[u] - \sum_{u\in S} G_2[u]  + C_1 - C_2\\
		&\le \sum_{u \in S} G_1[u] - \sum_{u\in S} G_2[u]  \tag{Since $C_1\le C_2$}\\
		&= \norm{G_1-G_2}_{\ell_1} \tag{Since $G_1 \ge G_2$}\\
		&\le 1.
	\end{align*}
	Therefore $(G_1',G_2')\in \cI$ which proves property (2) of Definition~\ref{def:ellp_contractive_policy}.
\end{proof}

\section{Policy Gaussian algorithm}
\label{sec:policyGuassian}

In this section we will present a DPSU algorithm called \textsc{Policy Gaussian} which uses any symmetric $\ell_2$-contractive update policy. An update policy is called \emph{symmetric} if it updates items with equal weights by equal amounts. Later, we will present two specific symmetric $\ell_2$-contractive update policies called $\ell_1$-descent (Algorithm~\ref{alg:ell1-descent-ell2-contractive}) and $\ell_2$-descent (Algorithm~\ref{alg:ell2-descent-ell2-contractive}).  We can also use contractive update policies which are not symmetric with a small increase in the threshold $\rho$, see Appendix~\ref{sec:sensitivity_implies_DP}.

The \textsc{Policy Gaussian} algorithm is described in Algorithm~\ref{alg:policy-gaussian}. The cutoff parameter $\Gamma$ will be used in the update policy (Algorithm~\ref{alg:ell2-descent-ell2-contractive} and \ref{alg:ell1-descent-ell2-contractive}). Intuitively, the update policy will stop increasing weights of items whose weights reach a cutoff $\Gamma.$ Since the added noise is $\cN(0,\sigma^2)$ which is centered at 0, we want to set the cutoff $\Gamma$ in the update policy to be sufficiently above (but not too high above) the threshold $\rho_\Gauss$. Thus we pick $\Gamma = \rho_{\Gauss} + \alpha \cdot \sigma$ for some $\alpha>0$. From our experiments, choosing $\alpha\in [2,6]$ empirically yields these best results. The parameters $\sigma,\rho_\Gauss$ are set so as to achieve $(\eps,\delta)$-DP as shown in Theorem~\ref{thm:policy-gaussian}. $\Phi(\cdot)$ is the cumulative density function of standard Gaussian distribution and $\Phi^{-1}(\cdot)$ is its inverse.

\begin{algorithm}[ht]
 \caption{\textsc{Policy Gaussian} algorithm for DPSU}
 \label{alg:policy-gaussian}
 \begin{algorithmic}
 \STATE {\bfseries Input:} $D$: Database of $n$ users where each user has some subset $W\subset U$\\
 $\Delta_0$: maximum contribution parameter\\
 $(\eps,\delta)$: privacy parameters\\
 $\alpha$: parameter for setting cutoff
\STATE {\bfseries Output:} S: A subset of $\cup_i W_i$
\STATE  // Standard deviation in Gaussian noise\\
\STATE $\sigma\leftarrow \min\left\{\sigma:\Phi\left(\frac{1}{2\sigma}-\eps\sigma\right)-e^\eps\Phi\left(-\frac{1}{2\sigma}-\eps\sigma\right)\le \frac{\delta}{2}\right\}$
\STATE  // Threshold parameter
\STATE $\rho_{\Gauss} \leftarrow \max_{1\le t \le \Delta_0}\left(\frac{1}{\sqrt{t}}+\sigma \Phi^{-1}\left(\left(1-\frac{\delta}{2}\right)^{1/t}\right)\right)$ 
\STATE $\Gamma \leftarrow \rho_{\Gauss}+ \alpha\cdot \sigma$ \COMMENTALG{Cutoff parameter for update policy}
\STATE Run Algorithm~\ref{alg:meta} with $\textsf{Noise}\sim \cN(0,\sigma^2)$ and any symmetric $\ell_2$-contractive update policy (such as Algorithm~\ref{alg:ell2-descent-ell2-contractive} or \ref{alg:ell1-descent-ell2-contractive} with cutoff parameter $\Gamma$) to output $S$.
	\end{algorithmic}
\end{algorithm}

To find $\min\left\{\sigma:\Phi\left(\frac{1}{2\sigma}-\eps\sigma\right)-e^\eps\Phi\left(-\frac{1}{2\sigma}-\eps\sigma\right)\le \frac{\delta}{2}\right\}$, one can use binary search because $\Phi\left(\frac{1}{2\sigma}-\eps\sigma\right)-e^\eps\Phi\left(-\frac{1}{2\sigma}-\eps\sigma\right)$ is a decreasing function of $\sigma.$ An efficient and robust implementation of this binary search can be found in~\cite{BalleW18}.

\subsection{Privacy analysis of \textsc{Policy Gaussian}}
 In this section we will prove that the \textsc{Policy Gaussian} algorithm (Algorithm~\ref{alg:policy-gaussian}) satifies $(\eps,\delta)$-DP. By Theorem~\ref{thm:sensitivity_implies_DP_informal} and Theorem~\ref{thm:contractivity_implies_sensitivity}, we already have an intuitive path to prove privacy. We now state privacy claims formally.
\begin{theorem}
\label{thm:policy-gaussian}
The \textsc{Policy Gaussian} algorithm (Algorithm \ref{alg:policy-gaussian}) is $(\epsilon, \delta)$-DP if $\sigma,\rho_\Gauss$ are chosen s.t.
\begin{align*}
&\Phi\left(\frac{1}{2\sigma}-\eps\sigma\right)-e^\eps\Phi\left(-\frac{1}{2\sigma}-\eps\sigma\right)\le \frac{\delta}{2} \text{   and }\\
&\rho_{\Gauss} \geq \max_{1\le t \le \Delta_0}\left(\frac{1}{\sqrt{t}}+\sigma \Phi^{-1}\left(\left(1-\frac{\delta}{2}\right)^{1/t}\right)\right).
\end{align*}
\end{theorem}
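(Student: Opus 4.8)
The plan is to mirror the privacy analysis of \textsc{Policy Laplace} (Theorem~\ref{thm:policy-laplace}), replacing the Laplace tail bounds with Gaussian ones and splitting the failure budget $\delta$ evenly between two sources of error. Fix neighboring databases $D_1,D_2$ with $D_1$ containing one extra user, and let $P,Q$ be the output distributions and $H_1,H_2$ the corresponding histograms. By Lemma~\ref{lem:contraction_ell2} the \textsc{$\ell_2$-descent} policy is $\ell_2$-contractive, so Theorem~\ref{thm:contractivity_implies_sensitivity} gives $\ellnorm{H_1|_{\supp(H_2)}-H_2}{2}\le 1$. Let $E$ be the event that the released set is contained in $\supp(H_2)$, i.e.\ that no item unique to the extra user is output.

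First I would establish the conditional claim $P|_E \approx_{\eps,\delta/2} Q$. Restricted to $\supp(H_2)$, the two histograms differ by $\ell_2$-distance at most $1$, so by the tight Gaussian mechanism (Proposition~\ref{lem:gaussian_mechanism}) and the choice of $\sigma$ satisfying $\Phi\!\left(\frac{1}{2\sigma}-\eps\sigma\right)-e^\eps\Phi\!\left(-\frac{1}{2\sigma}-\eps\sigma\right)\le \delta/2$, the noised histograms $\hat H_1|_{\supp(H_2)}$ and $\hat H_2$ are $(\eps,\delta/2)$-close; thresholding at $\rho_\Gauss$ is post-processing, so Lemma~\ref{lem:post_processing_eps_delta} preserves closeness and yields $P|_E \approx_{\eps,\delta/2} Q$. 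By Lemma~\ref{lem:eps_delta_conditioning} with $\delta'=\delta/2$, it then suffices to show $P(E)\ge 1-\delta/2$.

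The crux is bounding $P(\bar E)$, the probability that some new item $u\in T:=\supp(H_1)\setminus\supp(H_2)$ exceeds $\rho_\Gauss$. The key observation is that every item in $T$ has weight $0$ until the extra user acts (no other user can touch it without placing it in $\supp(H_2)$), so its final weight is exactly the increment assigned by one application of \textsc{$\ell_2$-descent} from weight $0$. For such an item the gap to the cutoff is $\Gamma$, and since $Z=\left(\sum_u (\Gamma-H[u])^2\right)^{1/2}\ge \Gamma\sqrt{|T|}$, the assigned increment is at most $1/\sqrt{|T|}$ in either branch of the policy. Writing $t=|T|\le\Delta_0$ and using $H_1[u]\le 1/\sqrt{t}$ together with independence of the Gaussian noise across items, a product bound gives $P(\bar E)\le 1-\Phi\!\left(\frac{\rho_\Gauss-1/\sqrt{t}}{\sigma}\right)^{t}$. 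Requiring this to be at most $\delta/2$ rearranges exactly to $\rho_\Gauss\ge 1/\sqrt{t}+\sigma\,\Phi^{-1}\!\left((1-\delta/2)^{1/t}\right)$; taking the maximum over $1\le t\le\Delta_0$ gives the stated threshold and completes the argument.

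I expect the main obstacle to be the weight bound $H_1[u]\le 1/\sqrt{t}$ for $u\in T$: unlike the sensitivity bound, which is handed to us by contractivity, this requires arguing carefully that the new items all start from weight $0$ and reasoning about how \textsc{$\ell_2$-descent} spreads a unit-$\ell_2$ budget among them, in particular bounding $Z$ from below by the contribution of the $T$-coordinates alone. The even split of $\delta$ into $\delta/2+\delta/2$ is the other point to get right, since it is what couples the Gaussian mechanism's noise parameter to the threshold calibration.
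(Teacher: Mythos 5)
Your proposal is correct and follows essentially the same route as the paper's proof: the same event $E$, the same $(\eps,\delta/2)$ conditional closeness via the Gaussian mechanism and post-processing, the same appeal to Lemma~\ref{lem:eps_delta_conditioning}, and the same product tail bound using $H_1[u]\le 1/\sqrt{|T|}$ for $u\in T$. Your explicit justification of that weight bound via $Z\ge\Gamma\sqrt{|T|}$ (covering both branches of \textsc{$\ell_2$-descent}) is a detail the paper leaves implicit, and your form of the tail bound also fixes a typo in the paper's Equation~(\ref{eqn:union_bound_gaussian}).
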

\begin{proof}
Suppose $D_1$ and $D_2$ are neighboring databases where $D_1$ has one extra user compared to $D_2$. Let $P$ and $Q$ denote the distribution of output of the algorithm when the database is $D_1$ and $D_2$ respectively.
We want to show that $P\approx_{\eps,\delta} Q$. It is enough to prove this for any fixed choice of $W_i'\subset W_i$ (in Algorithm~\ref{alg:meta_histogram}) identical in both instances, which corresponds to a coupling. Let $E$ be the event that $A\subset \supp(H_2).$ 
\begin{claim}
	$P|_E \approx_{\eps,\delta/2} Q$
\end{claim}

\begin{proof}
	Let $H_1$ and $H_2$ be the histograms generated by the algorithm from databases $D_1$ and $D_2$ respectively. And $\hat H_1$ and $\hat H_2$ be the histograms obtained by adding $\cN(0,\sigma^2)$ noise to each entry of $H_1$ and $H_2$ respectively. 
	By the post-processing lemma (Lemma~\ref{lem:post_processing_eps_delta}), it is enough to show that the distributions of the $\hat H_1|_{\supp(H_2)}$ and $\hat H_2$ are $(\eps,\delta/2)$-close to each other. Because the histogram building algorithm (Algorithm~\ref{alg:meta_histogram}) has $\ell_2$-sensitivity of at most 1 by Theorem~\ref{thm:contractivity_implies_sensitivity}, $\ellnorm{H_1|_{\supp(H_2)}-H_2}{2}\le 1$. Therefore by properties of Gaussian mechanism (Proposition~\ref{lem:gaussian_mechanism}), it is enough to choose $\sigma$ as in the statement of the theorem.
\end{proof}
By Lemma~\ref{lem:eps_delta_conditioning}, it is enough to show that $P(E)\ge 1-\delta/2$. Let $T=\supp(H_1)\setminus \supp(H_2).$ Note that $|T|\le \Delta_0$ and $H_1[u]\le \frac{1}{\sqrt{|T|}}$ for $u\in T$ by symmetry of the update policy.

\begin{align}
    P(\bar E) &= \Pr[\exists u \in T\ |\ \hat{H_1}[u] > \rho_{\Gauss}] \nonumber \\
    &= 1-\Pr[\forall u\in T\ \ \hat{H_1}[u] \le \rho_{\Gauss}]\nonumber \\
    &= 1-\prod_{u\in T} \Pr[\hat{H_1}[u] \le \rho_{\Gauss}]\nonumber \\
    &= 1-\prod_{u\in T} \Pr[H_1[u] + X_u \le \rho_{\Gauss}] \tag{$X_u \sim \cN(0,\sigma^2)$} \nonumber \\
    &\le 1-\prod_{u\in T} \Pr\left[X_u \le \rho_{\Gauss} - \frac{1}{\sqrt{|T|}}\right]  \tag{$H_1[u]\le \frac{1}{\sqrt{|T|}}$ for $u\in T$}\nonumber \\
    &= 1-\Phi\left(\frac{\rho_\Gauss}{\sigma}-\frac{1}{\sqrt{|T|}}\right)^{|T|}  \label{eqn:union_bound_gaussian}
\end{align}
Thus for $$\rho_{\Gauss} \geq \max_{1\le t \le \Delta_0}\left(\frac{1}{\sqrt{t}}+\sigma \Phi^{-1}\left(\left(1-\frac{\delta}{2}\right)^{1/t}\right)\right),$$ we have $P(\bar E)\le \delta/2$. 
Therefore the DP Set Union algorithm (Algorithm \ref{alg:meta}) is $(\eps,\delta)$-DP.
\end{proof}

\subsection{$\ell_2$-contractive update policies}
We will set some \emph{cutoff} $\Gamma$ above the threshold $\rho$ and once an item's count ($H[u]$) crosses the cutoff, we don't want to increase it further. In this policy, each user starts with a budget of 1. But now, the total change a user can make to the histogram can be at most $1$ when measured in $\ell_2$-norm. In other words, sum of the squares of the changes that the user makes is at most 1. The user wants the weights of items in their subset to get as close to the cutoff ($\Gamma$) as possible, say in some $\ell_q$-norm. Therefore the user moves the weights vector (restricted to the set $W$ of $\Delta_0$ items the user has) by an $\ell_2$-distance of at most 1 so as to get as close to the point $(\Gamma,\Gamma,\dots,\Gamma)$ as possible in $\ell_q$-norm. This is called \textsc{$\ell_q$-descent}. This update policy is presented in Algorithm~\ref{alg:ellq-descent}.

\begin{algorithm}[H]
\caption{\textsc{$\ell_q$-descent} update policy}
\label{alg:ellq-descent}
\begin{algorithmic}
	\STATE {\bfseries Input:} $H_0$: Current histogram \\
	$W$: A subset of $U$ of size at most $\Delta_0$\\
	$\Gamma$: cutoff parameter
	\STATE {\bfseries Output:} $H_1$: Updated histogram
	\STATE
	\STATE $H_1|_{U\setminus W} \leftarrow H_0|_{U\setminus W}$
	\STATE $H_1|_W \leftarrow \mathrm{argmin}_{y\in \R^W} \norm{(\Gamma,\Gamma,\dots,\Gamma) - y}_{\ell_q}$ where $\norm{y - H_0|_W}_{\ell_2}\le 1$
\end{algorithmic}
\end{algorithm}

The most interesting choices for $q$ are $q=1$ and $q=2$. We will now show that both these choices lead to $\ell_2$-contractive update policies. $q=1$ is intuitively preferable, because it is well-known that $\ell_1$-norm minimization is \emph{sparsity-inducing}. Therefore, we expect that in \textsc{$\ell_1$-descent} the weights of many items reach the maximum value of $\Gamma$, and subsequently these items will be output by Algorithm~\ref{alg:meta_histogram} with high probability. That is $\ell_1$-norm minimization is a good proxy for maximizing the number of items output by Algorithm~\ref{alg:meta_histogram}. We will demonstrate this in our experiments (Section~\ref{sec:experiments}).

\subsubsection{\textsc{$\ell_2$-descent} update policy for $\ell_2$-contractivity}

 This policy is obtained by setting $q=2$ in Algorithm~\ref{alg:ellq-descent} and can be implemented efficiently as shown in Algorithm~\ref{alg:ell2-descent-ell2-contractive}. This policy can also be interpreted as \emph{gradient descent} to minimize the $\ell_2$-distance between the current weighted histogram and the point $(\Gamma,\Gamma,\dots,\Gamma)$, hence the name \textsc{$\ell_2$-descent}. Since the gradient vector is in the direction of the line joining the current point and $(\Gamma,\Gamma,\dots,\Gamma)$, the \textsc{$\ell_2$-descent} policy is moving the current histogram towards $(\Gamma,\Gamma,\dots,\Gamma)$ by an $\ell_2$-distance of at most 1.

\begin{algorithm}[H]
 \caption{\textsc{$\ell_2$-descent} update policy for $\ell_2$-contractivity}
 \label{alg:ell2-descent-ell2-contractive}
 \begin{algorithmic}
 \STATE {\bfseries Input:} $H_0$: Current histogram\\
 $W$: A subset of $U$ of size at most $\Delta_0$\\
 $\Gamma$: cutoff parameter
 \STATE {\bfseries Output:} $H_1$: Updated histogram
 \STATE
 \STATE $H_1|_{U\setminus W}\leftarrow H_0|_{U \setminus W}$
 \STATE $G \leftarrow (\Gamma,\Gamma,\dots,\Gamma)-H_0|_W$ 
 \STATE // $G$ is the vector joining $H_0|_W$ to $(\Gamma,\Gamma,\dots,\Gamma)$
 \STATE $H_1|_W \leftarrow H_0|_W + \frac{G}{\max\{\norm{G}_{\ell_2},1\}}$ 
 \STATE // If $\norm{G}_{\ell_2}\le 1$, then update $H_0|_W$ to $(\Gamma,\Gamma,\dots,\Gamma)$. Else, move $H_0|_W$ in the direction of $(\Gamma,\Gamma,\dots,\Gamma)$ by an $\ell_2$-distance of at most 1.
\end{algorithmic}
\end{algorithm}

We will need the following geometric lemma to prove $\ell_2$-contraction.
\begin{lemma}
	\label{lem:plane_geometry_fact}
	Let $A,B,C$ denote the vertices of a triangle in the Euclidean plane. If $|AB|> 1,$ let $B'$ be the point on the side $AB$ which is at a distance of $1$ from $B$ and if $|AB|\le 1,$ define $B'=A$. $C'$ is defined similarly. Then $|B'C'|\le |BC|.$
\end{lemma}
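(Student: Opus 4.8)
The plan is to place $A$ at the origin and reduce everything to the law of cosines. Write $r=|AB|$, $s=|AC|$, and let $\theta=\angle BAC$ be the angle at $A$, so that $|BC|^2 = r^2 + s^2 - 2rs\cos\theta$. The point of the construction is that $B'$ and $C'$ are obtained by pulling $B$ and $C$ straight toward $A$ along the rays $AB$ and $AC$, either by a distance of exactly $1$ or all the way to $A$. The key structural observation is that in the regime where both points are genuinely moved inward, this radial contraction preserves the angle $\theta$ at $A$, so the law of cosines applies verbatim to the contracted triangle. I would then split into three cases according to whether $r$ and $s$ exceed $1$.

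The main case is $r>1$ and $s>1$. Here $B'$ and $C'$ lie on the rays $AB$ and $AC$ at distances $r-1$ and $s-1$ respectively, subtending the same angle $\theta$ at $A$, so $|B'C'|^2 = (r-1)^2 + (s-1)^2 - 2(r-1)(s-1)\cos\theta$. Subtracting the two expressions and expanding, the cross terms organize cleanly and I expect the difference to factor as
$$|B'C'|^2 - |BC|^2 = 2(r+s-1)(\cos\theta - 1).$$
Since $r,s>1$ we have $r+s-1>0$, and $\cos\theta - 1 \le 0$ always, so the right-hand side is non-positive and $|B'C'|\le|BC|$. Verifying this factorization is the one genuine computation in the proof.

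The remaining cases are easier and do not require the law of cosines. If $r\le 1$ and $s\le 1$, then $B'=C'=A$, so $|B'C'|=0\le|BC|$ trivially. In the mixed case, say $r>1\ge s$ (the other is symmetric), we have $C'=A$ while $B'$ sits at distance $r-1$ from $A$, so $|B'C'|=r-1$; then the ordinary triangle inequality gives $|BC|\ge \big| |AB|-|AC| \big| = r-s \ge r-1$, using $s\le 1$, which is exactly what is needed.

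The only subtlety worth flagging is bookkeeping rather than mathematics: the \emph{same-angle} claim driving the main computation is valid only when both points are contracted radially, and the mixed case must instead be dispatched by the triangle inequality, since there one endpoint collapses to $A$ and the angle at $A$ is no longer the relevant quantity. Once the case split is organized this way, the factorization settles the main case and the triangle inequality settles the mixed case, so the argument closes immediately.
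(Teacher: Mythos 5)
Your proof is correct. In the main case ($r,s>1$) you are doing exactly what the paper does: apply the law of cosines at the common angle $\theta$ to both the original and the radially contracted triangle and observe that the difference is a single non-positive term; your factorization $|B'C'|^2-|BC|^2=2(r+s-1)(\cos\theta-1)$ is algebraically identical to the paper's identity $|BC|^2=|B'C'|^2+2(x+y+1)(1-\cos\theta)$ after the substitution $r=x+1$, $s=y+1$ (and it incidentally avoids a sign typo present in the paper's displayed expansion). Where you genuinely diverge is the mixed case $r>1\ge s$: the paper runs a second law-of-cosines computation, decomposing $|BC|^2-x^2$ into the three nonnegative pieces $2x(1-z\cos\theta)+(z-\cos\theta)^2+(1-\cos^2\theta)$, whereas you dispatch it in one line with the reverse triangle inequality, $|BC|\ge r-s\ge r-1=|B'C'|$. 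Your route is more elementary and shorter; the paper's has the mild virtue of keeping the whole proof inside a single cosine-formula template. Both are complete, and your explicit flag that the ``same angle'' observation only applies when neither endpoint collapses to $A$ is exactly the right bookkeeping.
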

\begin{figure}[ht]
\centering
\includegraphics[scale=0.5]{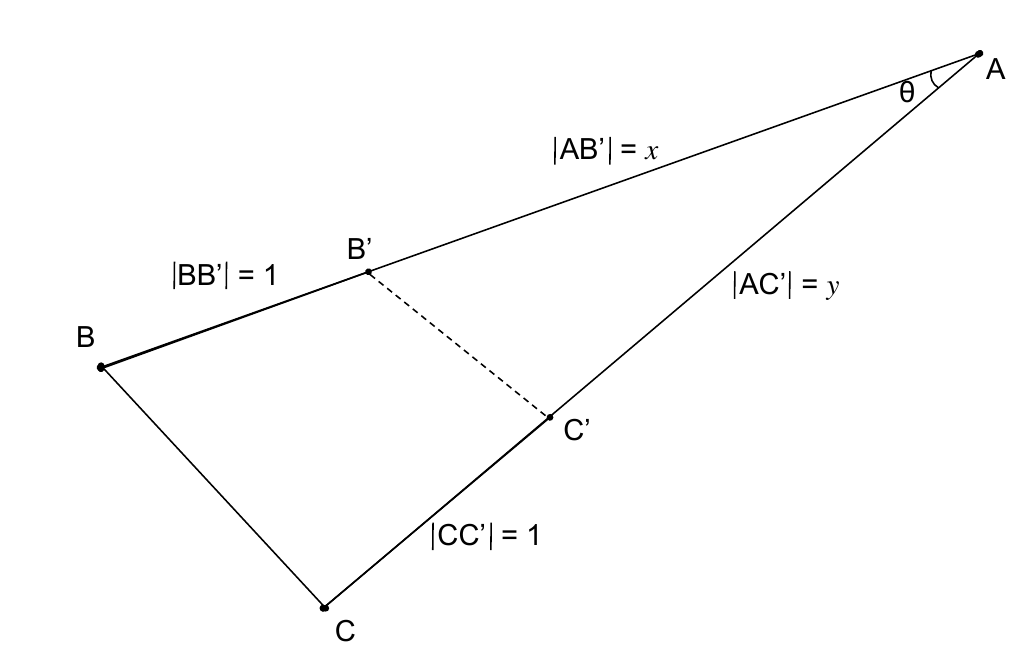}
\caption{\small Geometric explanation of Lemma \ref{lem:plane_geometry_fact} when $|AB|,|AC|> 1$. }
\label{fig:gaussian_geo_1}
\end{figure}
\begin{proof}
	Let us first assume that both $|AB|,|AC|>1.$
	Let $\theta$ be the angle at $A$ and let $|AB'|=x, |AC'|=y$ as shown in Figure~\ref{fig:gaussian_geo_1}. Then by the cosine formula,
	\begin{align*}
		|BC|^2 &= |AB|^2+|AC|^2-2|AB||AC|\cos\theta\\ 
		&= (x+1)^2+(y+1)^2-2(x+1)(y+1)\cos\theta\\
		&= x^2 + y^2 + 2xy\cos\theta + 2(x+y+1)(1-\cos\theta)\\
		&\ge x^2 + y^2+ 2xy\cos\theta \tag{$\cos \theta \le 1$}\\
		&=|B'C'|^2.
	\end{align*}
	\begin{figure}[ht]
	\centering
	\includegraphics[scale=0.5,trim=1.5cm 0.5cm 3cm 2cm, clip]{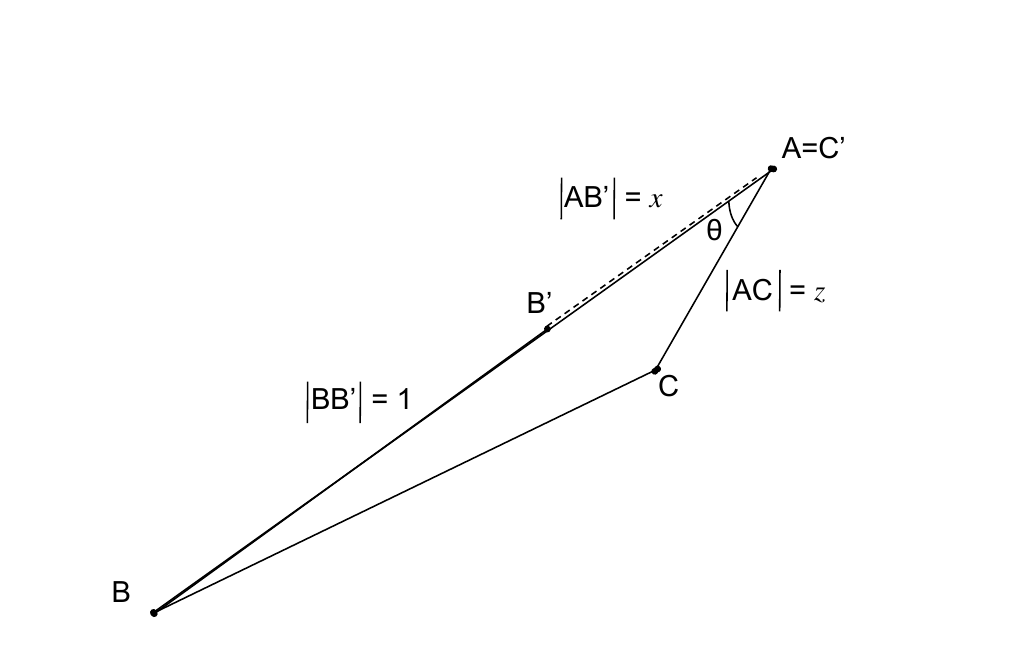}
	\caption{\small Geometric explanation of Lemma \ref{lem:plane_geometry_fact} when $|AB|>1,|AC|\le 1$. }
	\label{fig:gaussian_geo_2}
	\end{figure}
	If $|AB|,|AC|\le 1$, then $B'=C'=A$ and then the claim is trivially true. Suppose $|AB|>1,|AC|\le 1$. Now $C'=A.$ Let $|AB'|=x, |AC|=z\le 1$ and $\theta$ be the angle at $A$ as shown in Figure~\ref{fig:gaussian_geo_2}. Then by the cosine formula,
	\begin{align*}
		|BC|^2&= |AB|^2 + |AC|^2 - 2|AB||AC|\cos\theta\\
		&=(x+1)^2+z^2-2(x+1)z\cos\theta\\
		&=x^2+2x(1-z\cos\theta)+(z-\cos\theta)^2+(1-\cos^2\theta)\\
		&\ge x^2 = |AB'|^2=|B'C'|^2. \tag{$0\le z\le 1,|\cos\theta|\le 1$}
	\end{align*}
	By symmetry, the claim is also true when $|AC|>1,|AB|\le 1$.
\end{proof}

\begin{lemma}
	\label{lem:contraction_ell2}
	The \textsc{$\ell_2$-descent} policy in Algorithm~\ref{alg:ell2-descent-ell2-contractive} is $\ell_2$-contractive.
\end{lemma}
\begin{proof}
Suppose there are two histograms $G_1,G_2$ over some domain $X$. Suppose a $G_1,G_2$ are updated by a new user using the policy in Algorithm~\ref{alg:ell2-descent-ell2-contractive}. Let $G_1',G_2'$ be the updated histograms. Then we want to claim that $\norm{G_1'-G_2'}_{\ell_2}\le \norm{G_1-G_2}_{\ell_2}$.

	A new user can increase $G_1$ and $G_2$ by at most 1 in $\ell_2$ distance. Let $\Gamma$ be the cutoff parameter in Algorithm~\ref{alg:ell2-descent-ell2-contractive}. Let $W$ be the set of $\Delta_0$ items with the new user, therefore only the items in $W$ will change in $G_1',G_2'$. Therefore we can just assume that $G_1,G_2$ are supported on $W$ for the sake of the analysis. Algorithm~\ref{alg:ell2-descent-ell2-contractive} moves $G_i$ towards $P=(\Gamma,\Gamma,\dots,\Gamma)$ by an $\ell_2$-distance of 1 (or to $P$ if the distance to $P$ is already lower than 1). We can restrict ourselves to the plane containing $G_1,G_2,P$ ($G_1',G_2'$ will also lie on the same plane). Now by Lemma~\ref{lem:plane_geometry_fact}, $\norm{G_1'-G_2'}_{\ell_2}\le \norm{G_1-G_2}_{\ell_2}$.
\end{proof}

\subsubsection{\textsc{$\ell_1$-descent} update policy for $\ell_2$-contractivity}
\label{sec:maxsum}

 This policy is obtained by setting $q=1$ in Algorithm~\ref{alg:ellq-descent} and can be implemented efficiently as shown in Algorithm~\ref{alg:ell1-descent-ell2-contractive}. We will set some \emph{cutoff} $\Gamma$ above the threshold $\rho$ to use in the update policy. Once the weight of an item ($H[u]$) crosses the cutoff, we do not want to increase it further. In this policy, each user starts with a budget of 1 (measured in $\ell_2$ norm). The user uniformly increases $H[u]$ for each $u\in W$ s.t. $H[u]<\Gamma$. Once some item's weight reaches $\Gamma,$ the user stops increasing that item and keeps increasing the rest of the items uniformly until the budget of 1 is expended.

\begin{algorithm}[H]
 \caption{\textsc{$\ell_1$-descent} update policy for $\ell_2$-contractivity}
 \label{alg:ell1-descent-ell2-contractive}
 \begin{algorithmic}
 \STATE {\bfseries Input:} $H_0$: Current histogram \\
 $W$: A subset of $U$ of size at most $\Delta_0$\\
$\Gamma$: cutoff parameter
\STATE {\bfseries Output:} $H_1$: Updated histogram
\STATE
\STATE $H_1|_{U\setminus W} \leftarrow H_0|_{U\setminus W}$
\STATE $G\leftarrow (\Gamma,\Gamma,\dots,\Gamma)-H_0|_W$ 
\IF {$\norm{G}_{\ell_2} \le 1$}
	\STATE $H_1|_W \leftarrow H_0|_W$
\ELSE
	\STATE Find $\lambda \ge 0$ such that $\sum_{u\in W} \min\{G[u],\lambda\}^2 = 1$
	\STATE $H_1[u] \leftarrow H_0[u] + \min\{G[u],\lambda\} \ \forall u\in W$
\ENDIF
\end{algorithmic}
\end{algorithm}

\begin{proposition}
	The \textsc{$\ell_1$-descent} policy in Algorithm~\ref{alg:ell1-descent-ell2-contractive} is $\ell_2$-contractive.
\end{proposition}
\begin{proof}
	Fix some cutoff $\Gamma$. We can ignore what is happening outside the set $W$, let $d=|W|$. Given a histogram $x\in \R^d$, let $F(x)\in \R^d$ be the updated histogram according to the \textsc{$\ell_1$-descent} policy in Algorithm~\ref{alg:ell1-descent-ell2-contractive}. We have
	\begin{align*}
		F(x) =&\ \underset{y}{\mathrm{argmax}} \sum_{i=1}^d y_i\\
			& \text{s.t. } y_i \le \Gamma\ \ \forall i \text{ and } \norm{y-x}_{\ell_2}\le 1. 
	\end{align*}
	We want to prove that $\norm{F(x)-F(x')}_{\ell_2}\le \norm{x-x'}_{\ell_2}.$ Note that $F$ is continuous everywhere and differentiable almost everywhere. Therefore it is enough to show that the spectral norm of the Jacobian of $F$, $\norm{J_F(x)}_{S_\infty} \le 1$ whenever $F$ is differentiable at $x$. Fix such an $x$ and WLOG assume that $\Gamma>x_1>x_2>\dots>x_d.$ Note that $F(x)$ will have the form $(\Gamma,\dots,\Gamma, <\Gamma,\dots,<\Gamma)$. Let $F(x)_i=\Gamma$ for $i\in [t]$ and $F(x)<\Gamma$ for $i>t.$ Now we can explicitly compute $F(x)$ as:

	\begin{align*}
	 	F(x)_i = \begin{cases}
	 		\Gamma &\text{ if } i\in [t]\\
	 		x_i + \lambda & \text{ if } i>t
	 	\end{cases}
	 \end{align*}
where $\lambda\ge \Gamma-x_t$ is such that $(\Gamma-x_1)^2+\dots+(\Gamma-x_t)^2+(d-t)\lambda^2 =1.$ Note that $\lambda$ therefore only depends on $x_1,\dots,x_t.$ Therefore the Jacobian has the block diagonal form:
\begin{align*}
	J_F(x) = \left[\frac{\partial F_i}{\partial x_j}\right] = 
	\left[\begin{array}{c|c}
	\bigzero & \bigzero\\
	\hline
	* & I
	\end{array} \right].
\end{align*}
Therefore $\norm{J_F(x)}_{S_\infty}\le 1.$
\end{proof}

\section{Experiments}
\label{sec:experiments}
While the algorithms we described generalize to many domains that involve the release of set union, our experiments will use a natural language dataset. In the context of n-gram release, $D$ is a database of users where each user is associated with 1 or more Reddit posts and $W_i$ is the set of unique n-grams used by each user. The goal is to output as large a subset of $n$-grams $\cup_i W_i$ as possible while providing $(\eps,\delta)$-differential privacy to each user. In our experiments we consider $n=1, 2, 3$ (i.e. unigrams, bigrams, and trigrams)\footnote{The code and dataset used for our experiments are available at \url{https://github.com/heyyjudes/differentially-private-set-union}}.

\subsection{Dataset}
\begin{figure}[ht]
\begin{center}
\includegraphics[width=8cm]{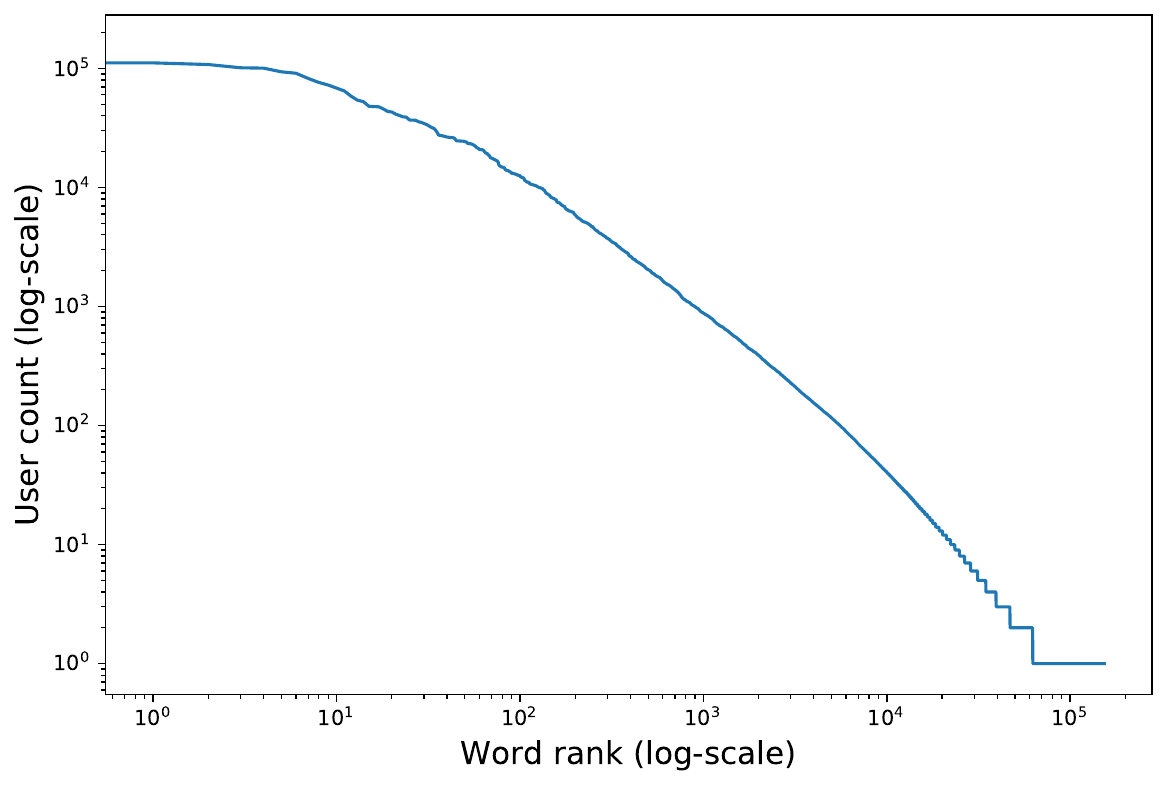}
\caption{\small Frequency (i.e. number of users who use the unigram) vs. rank of the unigram (based on frequency) on a log-log scale. This linear relationship shows that the frequency of unigrams among users also follows Zipf's law (power law), i.e., $\mathrm{count}\propto 1/\mathrm{rank}^\alpha$ for some constant $\alpha>0$. The $\alpha$ in this case is $\approx 1$.}
\label{fig:zipf_law}
\end{center}
\end{figure}
Our dataset is collected from the subreddit \texttt{r/AskReddit}. We take a sample of 15,000 posts from each month between January 2017 and December 2018. We filter out duplicate entries, removed posts, and deleted authors. For text preprocessing, we remove URLs and symbols, lowercase all words, and tokenize using \texttt{nltk.word\_tokenize}. After preprocessing, we again filter out empty posts to arrive at a dataset of 373,983 posts from 223,388 users. 

Similar to other natural language datasets, this corpus follows Zipf's law across users. The frequency of unigrams across users is inversely proportional to some power of the rank of the unigram. Using a log-log scale, the frequency of users for each unigram vs. the rank of the unigram is linear (Figure \ref{fig:zipf_law}). In other words, the lowest ranked (most common) unigrams are used by almost all users while the highest ranked (least common) unigrams are used by very few users. 

\begin{table}[ht]
\caption{\small Percentage of users with unique unigram count of less than or equal to $T$. The vast majority of users have less than 100 unique unigrams.}
\label{tab:user_words}
\begin{center}
\begin{small}
\begin{sc}
\begin{tabular}{lr}
\toprule
Threshold ($T$)   & Users with $|W_i| \leq$ T\\ \midrule
1   & 2.78\%                                   \\
10  & 29.82\%                               \\ 
50  & 79.16\%                                \\ 
100 & 93.13\%                               \\ 
300 & 99.59\%                               \\ \bottomrule
\end{tabular}
\end{sc}
\end{small}
\end{center}
\end{table}

The distribution of how many unigrams each user uses also follows a long tail distribution. While the top 10 users contribute between 850 and 2000 unique unigrams, most users (93.1\%) contribute less than 100 unique unigrams. Table \ref{tab:user_words} summarizes the percentage of users with a unique vocabulary smaller than each threshold T provided.

\subsection{Results}
\begin{table*}[ht]
\caption{\small Count of unigrams released by various set union algorithms. Results are averaged across 5 shuffles of user order. The best results for each algorithm are in bold. The privacy parameters are $\eps=3$ and $\delta=\exp(-10)$. The cutoff $\Gamma$ is calculated using $\alpha=3$ for \textsc{Policy Laplace} and \textsc{Policy Gaussian $\ell_2$} and $\alpha=5$ for all other algorithms} 
\label{tab:main_results}
\begin{center}
\begin{small}
\begin{sc}
\begin{tabular}{lrrrrr}
\toprule
$\Delta_0$  & \textbf{1}  & \textbf{10}  & \textbf{50} & \textbf{100} & \textbf{200}\\
\midrule
\textsc{Count Laplace}     & \textbf{4484} $\pm$ 32&  3666 $\pm$  7&  2199 $\pm$  8&  1502 $\pm$ 14 &   882 $\pm$  4\\
\textsc{Count Gaussian}    & 3179 $\pm$ 15&  6616 $\pm$ 18&  \textbf{6998} $\pm$ 23&  6470 $\pm$ 12 &  5492 $\pm$ 14\\
\textsc{Weighted Laplace}  & \textbf{4479} $\pm$ 26&  4309 $\pm$ 15&  4012 $\pm$ 10&  3875 $\pm$  9 &  3726 $\pm$ 17\\
\textsc{Weighted Gaussian} & 3194 $\pm$ 11&  6591 $\pm$ 18&  8570 $\pm$ 14&  8904 $\pm$ 24 &  \textbf{8996} $\pm$ 30\\
\textsc{Policy Laplace}    & 4387 $\pm$ 14& 12838 $\pm$ 42& \textbf{15421} $\pm$ 15& 14923 $\pm$ 2 & 14346 $\pm$ 24\\
\textsc{Policy Gaussian $\ell_1$}   & 3169 $\pm$ 13& 11010 $\pm$ 15& 16181 $\pm$ 33& 16954 $\pm$ 58 & \textbf{17113} $\pm$ 16\\
\textsc{Policy Gaussian $\ell_2$}   & 3180 $\pm$ 14& 10918 $\pm$ 24& 16188 $\pm$ 34& 17024 $\pm$ 16 & \textbf{17211} $\pm$ 37\\
\bottomrule
\end{tabular}
\end{sc}
\end{small}
\end{center}
\end{table*}

For the problem of outputting the large possible set of unigrams, Table \ref{tab:main_results} and Figure \ref{fig:main_table_fig}, summarize the performance of DP set union algorithms for different values of $\Delta_0$. The privacy parameters are $\eps=3$ and $\delta=\exp(-10)$. We compare our algorithms with baseline algorithms: \textsc{Count Laplace}, \textsc{Count Gaussian}, \textsc{Weighted Laplace}, and \textsc{Weighted Gaussian} discussed in Section~\ref{sec:baseline}. We use `\textsc{Policy Gaussian $\ell_2$}' to refer to \textsc{Policy Gaussian} algorithm which uses the \textsc{$\ell_2$-descent} update policy in Algorithm~\ref{alg:ell2-descent-ell2-contractive}. `\textsc{Policy Gaussian $\ell_1$}' refers to \textsc{Policy Gaussian} algorithm which uses the \textsc{$\ell_1$-descent} update policy in Algorithm~\ref{alg:ell1-descent-ell2-contractive}. Since we only present one $\ell_1$-contractive policy for \textsc{Policy Laplace} algorithm, in our experiments, \textsc{Policy Laplace} refers to the $\textsc{Policy Laplace}$ algorithm which uses the \textsc{$\ell_1$-descent} update policy in Algorithm~\ref{alg:ell1-descent-ell1-contractive}.

\begin{figure}[ht]
\begin{center}
\centerline{\includegraphics[width=12cm]{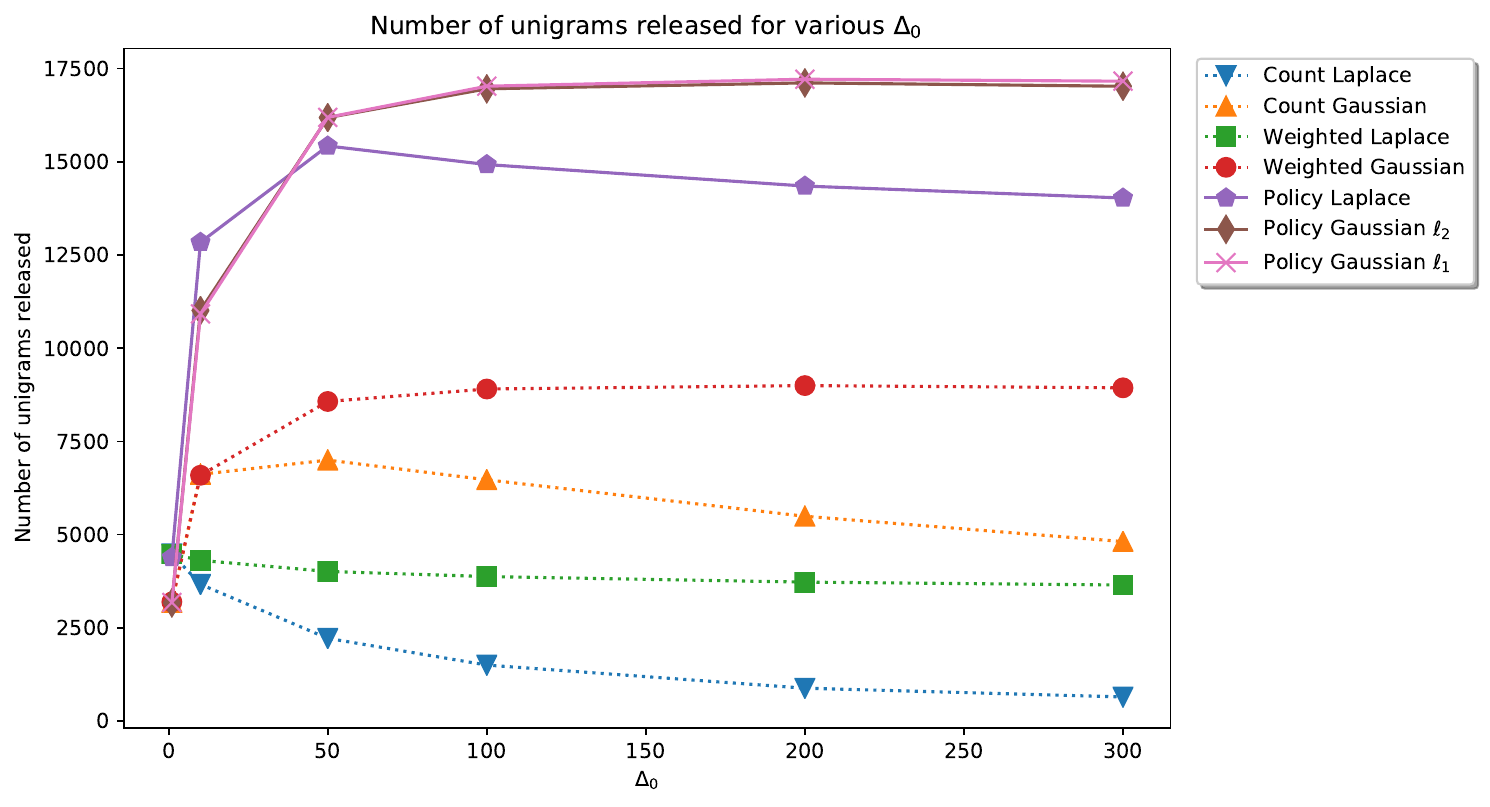}}
\caption{\small Count of unigrams released by set union algorithms averaged across 5 shuffles of user order. Privacy parameters are $\eps=3$ and $\delta=\exp(-10)$. The cutoff $\Gamma$ is calculated using $\alpha=3$ for \textsc{Policy Laplace} and \textsc{Policy Gaussian $\ell_1$} and $\alpha=5$ for \textsc{Policy Gaussian $\ell_2$}}
\label{fig:main_table_fig}
\end{center}
\end{figure}

\noindent Our conclusions are as follows:
\begin{itemize}
\item Our new algorithms \textsc{Policy Laplace} and \textsc{Policy Gaussian} output  a DP set union that is 2-4 times larger than output of  weighted/count based algorithms. This holds for all values of $\epsilon \geq 1$ (see Figure~\ref{fig:eps_fig1}).

\item To put the size of released set in context, we compare our new algorithms against the number of unigrams belonging to at least $k$ users (See Table \ref{tab:kanon}). For \textsc{Policy Laplace} with $\Delta_0=100$, the size of the output set covers almost all unigrams (94.8\%) when $k=20$ and surpasses the size of the output set when $k \geq 25$. \textsc{Policy Gaussian} with $\Delta_0=100$ covers almost all unigrams (91.8\%) when $k=15$ and surpasses the size of the output set when $k \geq 18$. In other words, our algorithms (with $\eps=3$ and $\delta=\exp(-10)$) outperform \textsc{$k$-anonymity} based algorithms for values of $k$ around 20.  
\end{itemize}

\begin{table}[ht]
\caption{\small This table shows the total number of unigrams that at least $k$ users possess ($|S_{k}|$) and the percentage coverage of this total by \textsc{Policy Laplace} ($|S_{PL}|=14739$) and \textsc{Policy Gaussian $\ell_2$} ($|S_{PG}|=16954$) for $\Delta_0=100$.}
\label{tab:kanon}
\begin{center}
\begin{small}
\begin{sc}
\begin{tabular}{lp{1cm}p{6cm}p{6cm}}
\toprule
$k$   & $|S_{k}|$  & \% coverage \textsc{Policy Laplace} & \% coverage \textsc{Policy Gaussian $\ell_2$} \\ \midrule
5  &  34699 & 24.5\% & 48.9\%  \\ 
10 &  23471 & 62.8\% & 72.2\%  \\
15 &  18461 & 79.8\% & 91.8\%  \\
18 &  16612 & 88.7\% & 102.1\%  \\
20 &  15550 & 94.8\% & 109.0\%  \\
25 &  13638 & 108.1\%& 124.3\%  \\ \bottomrule
\end{tabular}
\end{sc}
\end{small}
\end{center}
\end{table}

\subsubsection{Beyond Unigrams}
We also conduct experiments to compare the number of bigrams and trigrams released by each algorithm. This result is of interest when retrieving longer $n$-grams in real world settings. From Figure \ref{fig:bigram_delta} and \ref{fig:bigram_eps}, we see that the Gaussian mechanisms do better than Laplace algorithms across various values of $\Delta_0$ and $\eps$. While the \textsc{Policy Gaussian $\ell_1$} and \textsc{Policy Gaussian $\ell_2$} mechanisms performed similarly on unigrams, we see that \textsc{Policy Gaussian $\ell_1$} releases more bigrams across various parameter values. Looking at trigrams, we see that the total number output is smaller than bigrams. This is reasonable since the number of $n$-grams that only occur once increases as $n$ increases. Figure \ref{fig:trigram_delta} and Figure \ref{fig:trigram_eps} shows that the \textsc{Policy Gaussian $\ell_1$} and \textsc{Weighted Gaussian} perform best on trigrams. We expect this pattern to continue for $n>3$. 

\begin{figure}[ht]
    \centering
    \begin{minipage}{0.49\textwidth}
        \centering
        \includegraphics[width=\textwidth]{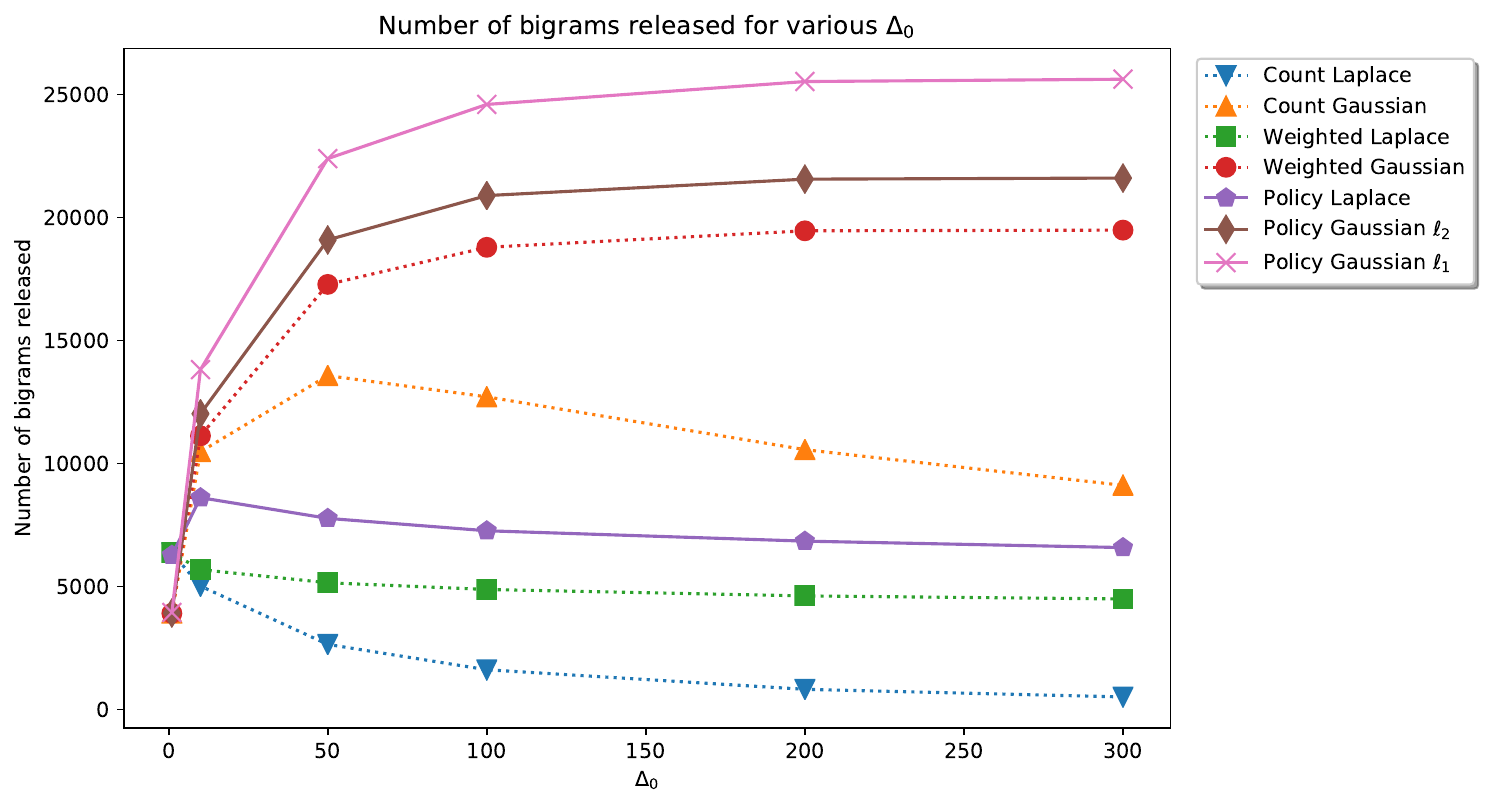} 
        \caption{\small Count of bigrams released at various values of $\Delta_0$ for parameters $\eps=3$, $\delta=\exp(-10)$.}
        \label{fig:bigram_delta}
    \end{minipage}\hfill
    \begin{minipage}{0.49\textwidth}
        \centering
        \includegraphics[width=\textwidth]{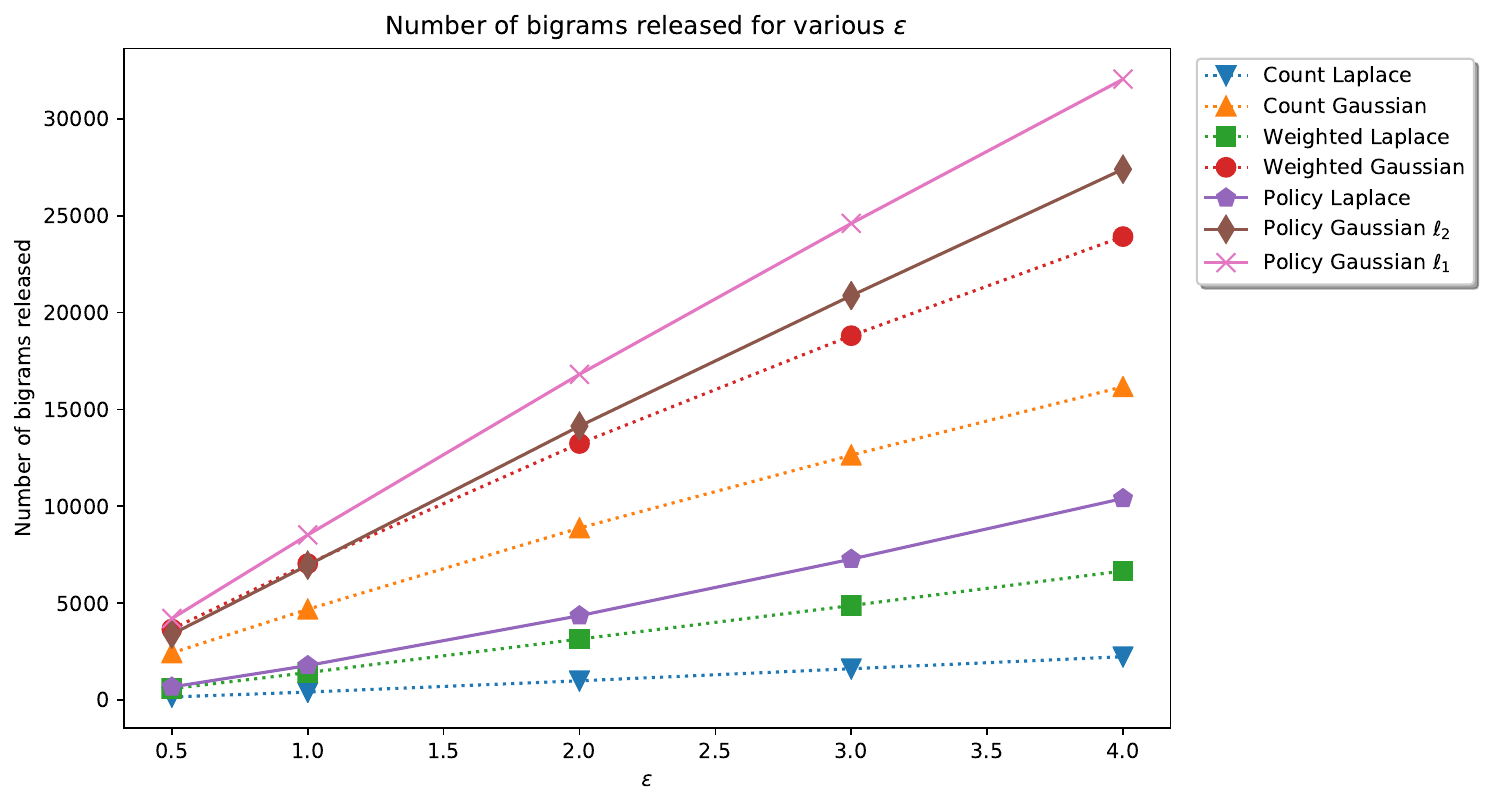} 
        \caption{\small Count of bigrams released at various values of $\eps$ for parameters $\Delta_0=100$, $\delta=\exp(-10)$.}
        \label{fig:bigram_eps}
    \end{minipage}
\end{figure}

\begin{figure}[ht]
    \centering
    \begin{minipage}{0.49\textwidth}
        \centering
        \includegraphics[width=\textwidth]{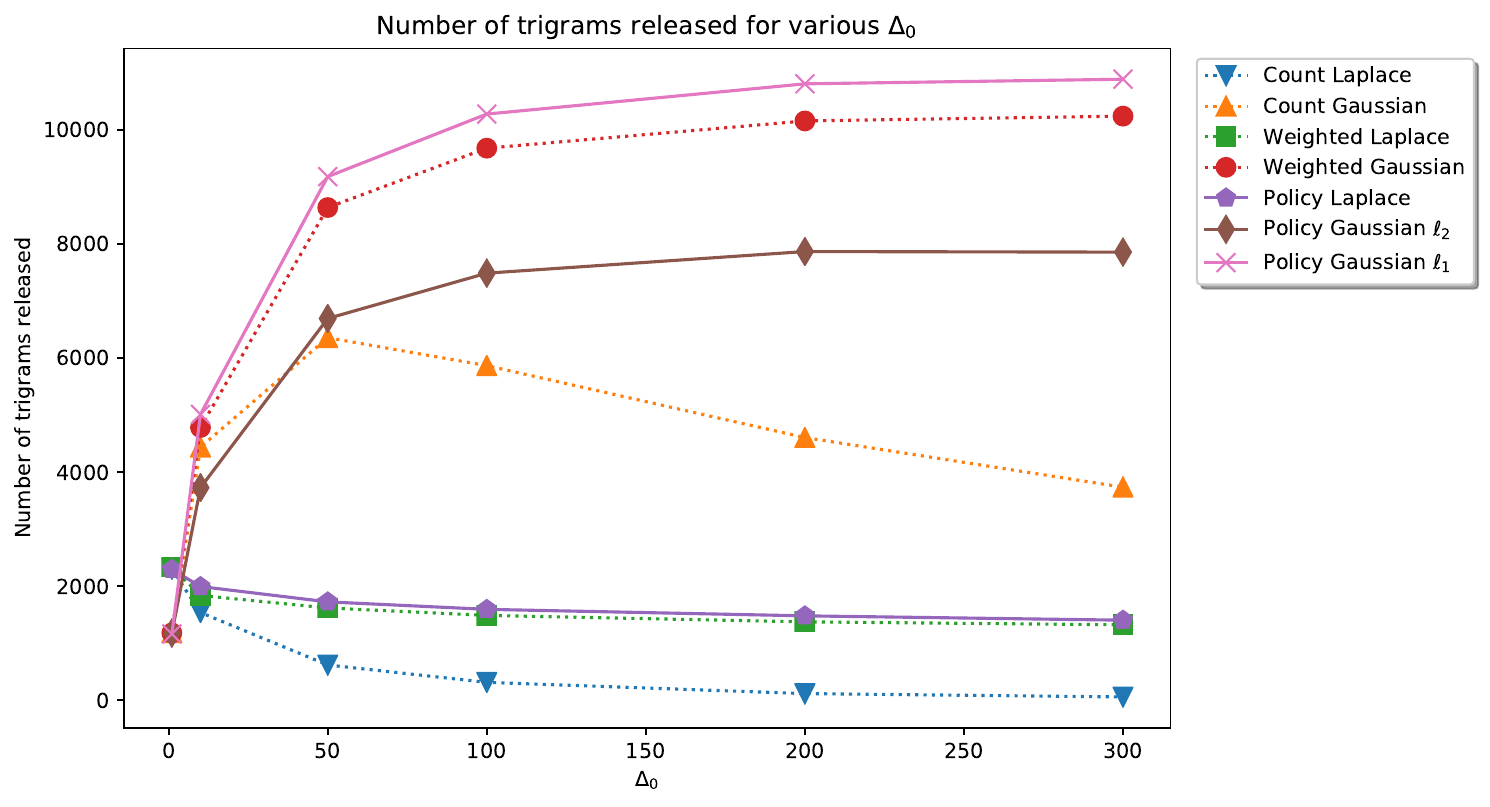} 
        \caption{\small Count of trigrams released at various values of $\Delta_0$ for parameters $\eps=3$, $\delta=\exp(-10)$.}
        \label{fig:trigram_delta}
    \end{minipage}\hfill
    \begin{minipage}{0.49\textwidth}
        \centering
        \includegraphics[width=\textwidth]{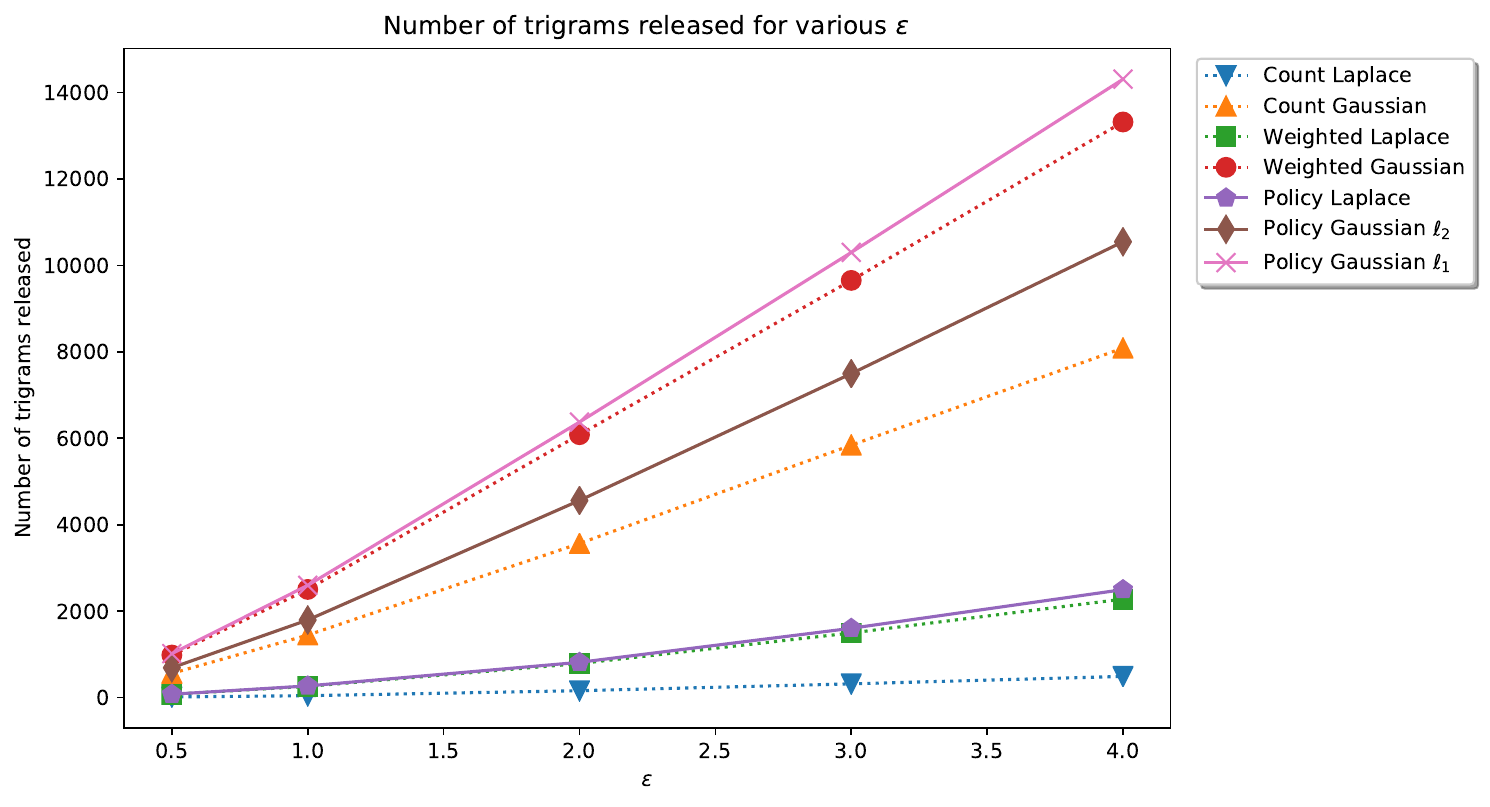} 
        \caption{\small Count of trigrams released at various values of $\eps$ for parameters $\Delta_0=100$, $\delta=\exp(-10)$.}
        \label{fig:trigram_eps}
    \end{minipage}
\end{figure}

\begin{figure}[ht]
    \centering
    \begin{minipage}{0.49\textwidth}
        \centering
        \includegraphics[width=\textwidth]{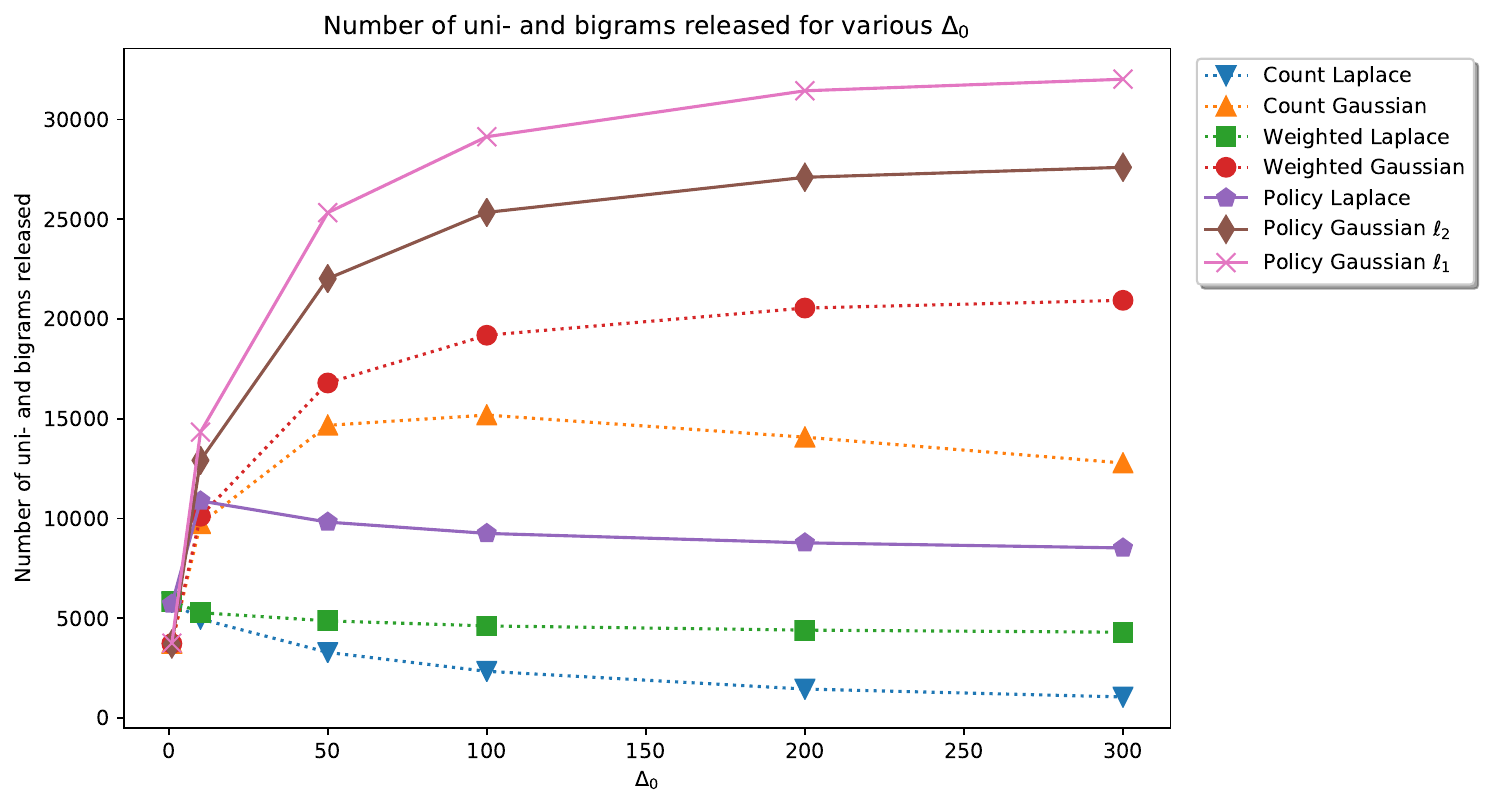} 
        \caption{\small Count of unigram and bigrams released at various values of $\Delta_0$ for parameters $\eps=3$, $\delta=\exp(-10)$.}
        \label{fig:bigram_union}
    \end{minipage}\hfill
    \begin{minipage}{0.49\textwidth}
        \centering
        \includegraphics[width=\textwidth]{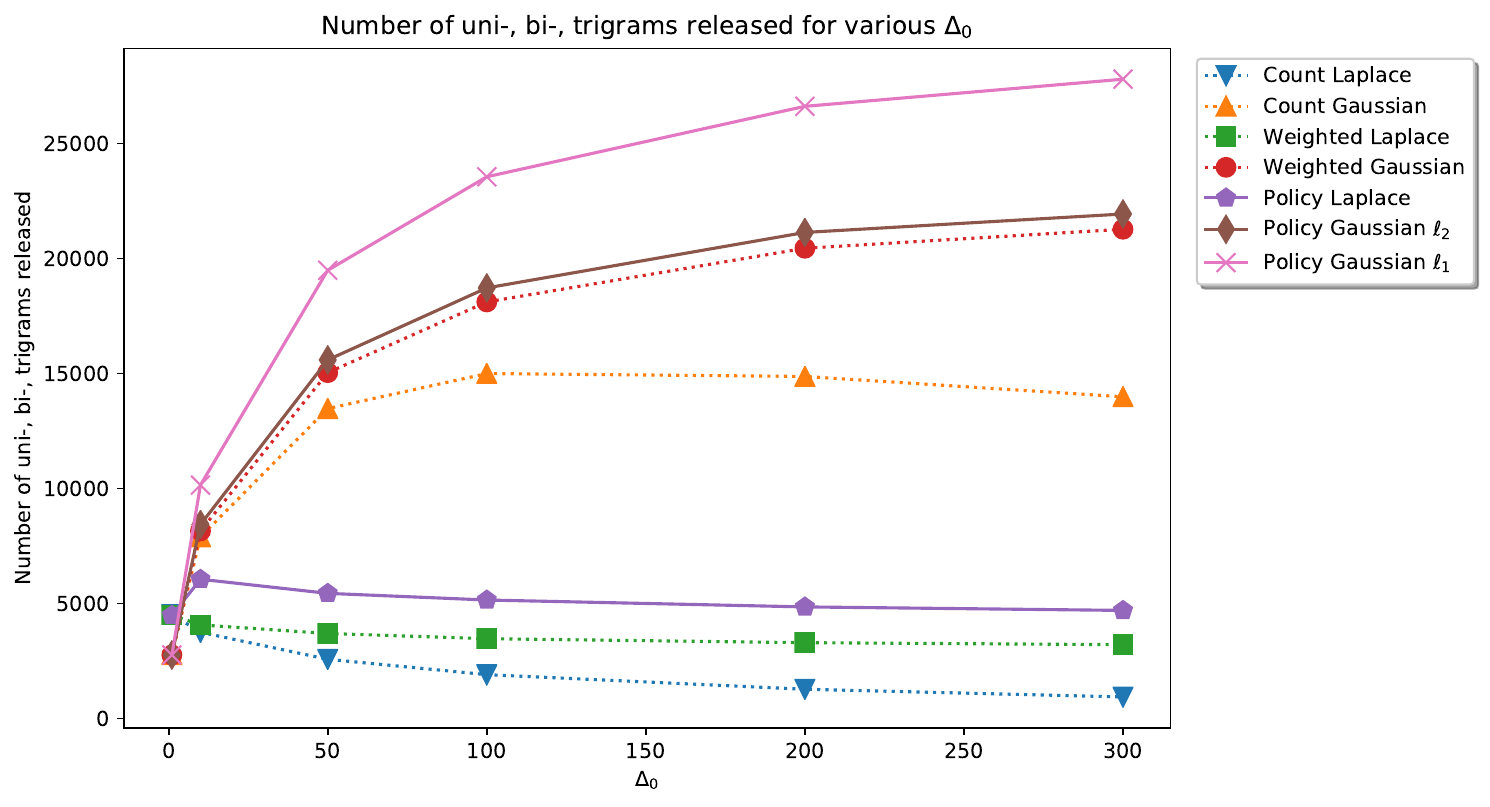} 
        \caption{\small Count of unigram, bigrams, and trigrams released at various values of $\Delta_0$ for parameters $\eps=3$, $\delta=\exp(-10)$.}
        \label{fig:trigram_union}
    \end{minipage}
\end{figure}

We also examine the case where we consider the union of $n$-grams for $n=1$ to $N$. When $N=1$, this is just the unigram case. When $N=2$ we consider items to be the union of all unigrams and bigrams. The results in Figure \ref{fig:bigram_union} and Figure \ref{fig:trigram_union} show the number of items output for $N=2$ (all unigrams and bigrams) and $N=3$ (all unigrams, bigrams, and trigrams). We see a similar pattern that all the Gaussian noise algorithms perform better than Laplace noise based algorithms. The \textsc{Policy Gaussian $\ell_1$} algorithm surpasses all other algorithms and emerges as the clear winner across all our different experiments.

\subsubsection{Multiple passes through each user}
\begin{table*}[ht]
\caption{\small Count of unigrams released by \textsc{Policy Laplace} and \textsc{Policy Gaussian $\ell_2$} algorithms for single and double passes over users. Results are averaged and rounded across 5 shuffles of user order. The privacy parameters are $\eps=3$ and $\delta=\exp(-10)$. $\alpha=2$ is chosen for the threshold parameter. Significant p-values for a two-sided independent $t$-test are in bold.} 
\label{tab:multi_iter_results}
\begin{center}
\begin{small}
\begin{sc}
\begin{tabular}{llll|lll}
\toprule
                           & \multicolumn{3}{l}{\textsc{Policy Laplace}}                       & \multicolumn{3}{l}{\textsc{Policy Gaussian $\ell_2$}}    \\ \midrule
$\Delta_0$ & 1 Pass                  & 2 Passes       & P-val          & 1 Pass         & 2 Passes       & P-val \\ \midrule
1                          & 4236 $\pm$ 14           & 4257 $\pm$ 17  & 0.083          & 3135 $\pm$ 25  & 3131 $\pm$ 20  & 0.829 \\ 
10                         & \textbf{12452 $\pm$ 31} & 12389 $\pm$ 17 & \textbf{0.008} & 10784 $\pm$ 22 & 10817 $\pm$ 54 & 0.293 \\ 
50                         & 15056 $\pm$ 35          & 15080 $\pm$ 21 & 0.262          & 15763 $\pm$ 33 & 15809 $\pm$ 45 & 0.139 \\ 
100                        & 14562 $\pm$ 50          & 14567 $\pm$ 24 & 0.846          & 14562 $\pm$ 50 & 14568 $\pm$ 24 & 0.846 \\ 
200                        & 14005 $\pm$ 33          & 13979 $\pm$ 31 & 0.271          & 14005 $\pm$ 33 & 13979 $\pm$ 31 & 0.271 \\ 
300                        & 13702 $\pm$ 37          & 13678 $\pm$ 47 & 0.448          & 13702 $\pm$ 37 & 13678 $\pm$ 47 & 0.447 \\ \bottomrule
\end{tabular}
\end{sc}
\end{small}
\end{center}
\end{table*}
In the experiments described thus far, each user contributes items once within the budget constraints. We also investigate whether the output of set union increases in size when each user contributes the same budget over multiple passes (e.g. user 1 contributes half of their budget each time over 2 passes), we compare \textsc{Policy Laplace} and \textsc{Policy Gaussian} outputs. Table \ref{tab:multi_iter_results} summarizes the results showing that there is not strong evidence suggesting that running multiple passes through the users improves the size of the output set.

\subsubsection{Selecting $\alpha$: parameter to set threshold $\Gamma$}
Figure \ref{fig:thresh_tune} shows the number of unigrams released by \textsc{Policy Laplace}, \textsc{Policy Gaussian $\ell_1$}, and \textsc{Policy Gaussian $\ell_2$} for various values of $\alpha$. We observe that the number of unigrams released increases sharply until $\alpha=3$, then remains nearly constant and then slowly decreases. We observe that \textsc{Policy Gaussian $\ell_1$} peaks at $\alpha=5$ while \textsc{Policy Laplace} and \textsc{Policy Gaussian $\ell_2$} peaks at $\alpha=3$. Thus we use these respective parameters for all of our experiments. This choice of $\alpha$ only affects the policy algorithms since the weighted and count algorithms do not use a threshold.
\begin{figure}[ht]
\begin{center}
\includegraphics[width=12cm]{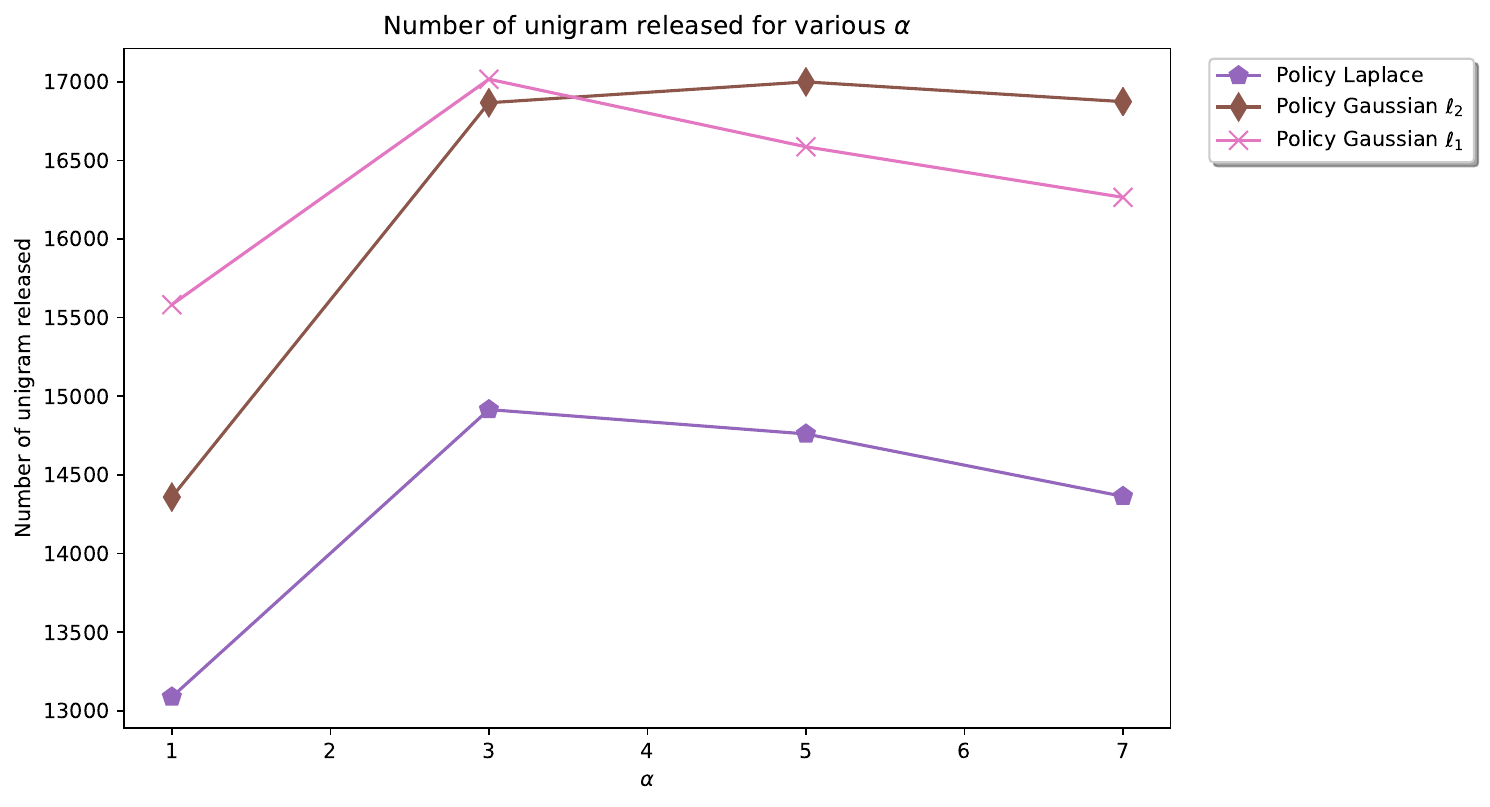}
\caption{\small Number of unigrams released for various values of $\alpha$ for \textsc{Policy Laplace}, \textsc{Policy Gaussian $\ell_1$}, and \textsc{Policy Gaussian $\ell_2$}. Here we fixed $\Delta_0=100$ and $\eps=3$.}
\label{fig:thresh_tune}
\end{center}
\end{figure}

\subsubsection{The effect of $\eps$}

We use $\eps = 3$ for the experiments in Table \ref{tab:main_results}. At this value of $\eps$ our policy algorithms perform much better than previous count and weighted algorithms. To check whether this result holds with smaller $\eps$, we also run these algorithms on various values of $\eps$. Figure \ref{fig:eps_fig1} shows that for $\eps \geq 1$ our policy algorithms always perform better for unigrams. For bigrams and trigrams, Gaussian noise algorithms perform best at various epsilons. Figure \ref{fig:bigram_eps} and Figure \ref{fig:trigram_eps} show that our \textsc{Policy Gaussian $\ell_1$} and \textsc{Policy Gaussian $\ell_2$} outperform \textsc{Count Gaussian} and \textsc{Weighted Gaussian} for $\eps \geq 2$.

\subsubsection{Selecting hyperparameters while maintaining privacy}
As can be seen from Table \ref{tab:main_results}  the $\Delta_0$ resulting in the largest output set varies by algorithm. Since most users in our dataset possess less than 300 unique unigrams, it is not surprising that the largest output set can be achieved with $\Delta_0 < 300$.  However, running our algorithms for different values of $\Delta_0$ and selecting the best output will result in a higher value of $\epsilon$. There are several ways to find the best value of $\Delta_0$ (or any other tunable parameter): 1) using prior knowledge of the data 2) running the algorithms on a small sample of the data to find the best parameters, and discarding that sample.
3) finally, one could also run all the algorithms in parallel and choose the best performing one. Here we will have to account for the loss in privacy budget; see \cite{LiuTalwar} for example.

\section{Conclusions and open problems}
We initiated the study of \emph{differentially private set union} (DPSU), which has many real-world applications.
We designed better algorithms for this problem using the notion of `contractive update policy' as a guiding principle. In our experiments, we demonstrated that our algorithms significantly outperform previous state-of-the-art algorithms. Algorithms based on Gaussian noise such as \textsc{Weighted Gaussian} and \textsc{Policy Gaussian} do much better than algorithms based on Laplace noise. In particular, the \textsc{Policy Gaussian} algorithm with \textsc{$\ell_1$-descent} update policy emerges as a clear winner across a range of scenarios. 

It would be interesting to find other contractive update policies which perform better than those we present in this paper. Another important open question is to explore how to parallelize our algorithms to enable them in scenarios where the input data is enormous and distributed across many machines. The policy based algorithms we introduce in this paper are harder to parallelize than algorithms like \textsc{Weighted Gaussian}. One possibility is to consider a hybrid approach, where each machine uses a policy based approach to update weights and the weights across machines are aggregated naively.


\bibliography{main}
\bibliographystyle{abbrvnat}
\newpage 
\appendix
\section{Bounded Sensitivity implies DP (Proof of Theorem~\ref{thm:sensitivity_implies_DP_informal})}
\label{sec:sensitivity_implies_DP}
In this section, we will prove a formal version of Theorem~\ref{thm:sensitivity_implies_DP_informal}, i.e., if the histogram output by Algorithm~\ref{alg:meta_histogram} has bounded $\ell_p$-sensitivity (for $p\in \{1,2\}$), then by adding appropriate noise and setting an appropriate threshold, Algorithm~\ref{alg:meta} for DP set union can be made differentially private. Here we do not assume that the contractive update policy is symmetric unlike in Theorems~\ref{thm:policy-laplace} and \ref{thm:policy-gaussian}. The lower bounds on the threshold ($\rho$) that we obtain in this generality are only slightly worse compared to the corresponding bounds in Theorems~\ref{thm:policy-laplace} and \ref{thm:policy-gaussian}.

\begin{theorem}
\label{thm:sensitivity_implies_DP_Laplace}
Suppose the histogram output by Algorithm~\ref{alg:meta_histogram} has $\ell_1$-sensitivity 1. Then Algorithm~\ref{alg:meta} is $(\epsilon, \delta)$-$DP$ when the \textsf{Noise} distribution is $\Lap(0,\lambda)$ where $\lambda=1/\eps$ and the threshold $$\rho \geq \max_{1\le t\le \Delta_0} 1+\frac{1}{\epsilon}\log\left(\frac{1}{2\left(1-(1-\delta)^{1/t}\right)}\right).$$
\end{theorem}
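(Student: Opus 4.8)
The plan is to reuse, almost verbatim, the conditioning-and-union-bound strategy from the proof of Theorem~\ref{thm:policy-laplace}: the only structural change is that $\ell_1$-sensitivity $1$ is now a hypothesis rather than something derived from the specific \textsc{$\ell_1$-descent} policy, so I must track where that specificity was used and replace it with the generic sensitivity bound. First I would fix neighboring databases $D_1, D_2$ with $D_1$ containing one extra user, let $H_1, H_2$ be the histograms produced by Algorithm~\ref{alg:meta_histogram}, and let $P, Q$ be the output distributions of Algorithm~\ref{alg:meta} on $D_1, D_2$. The target is $P \approx_{\eps,\delta} Q$, and I would set $\lambda = 1/\eps$ so that the Laplace mechanism calibrated to sensitivity $1$ is available.

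The central step is to introduce the event $E = \{A \subseteq \supp(H_2)\}$ that the released set contains no item outside the support of the smaller histogram, and to prove $P|_E \approx_{\eps,0} Q$. This is identical to the specific case: conditioned on $E$, both $A \sim P|_E$ and $A \sim Q$ are obtained by post-processing the noisy histograms restricted to $\supp(H_2)$, and on those coordinates the hypothesis gives $\ellnorm{H_1|_{\supp(H_2)} - H_2}{1} \le 1$, so the Laplace mechanism with $\lambda = 1/\eps$ yields $(\eps,0)$-closeness, which Lemma~\ref{lem:post_processing_eps_delta} preserves under post-processing. By Lemma~\ref{lem:eps_delta_conditioning} it then suffices to establish $P(\bar E) \le \delta$.

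The remaining work is bounding $P(\bar E)$, and this is where the single genuine change from Theorem~\ref{thm:policy-laplace} appears. Let $T = \supp(H_1)\setminus\supp(H_2)$; since new items can only be introduced by the extra user, $|T| \le \Delta_0$. In the \textsc{$\ell_1$-descent} proof one used the policy-specific fact $H_1[u] \le 1/|T|$, but from bounded sensitivity alone I only get that $H_2[u] = 0$ for $u \in T$, hence $\sum_{u\in T} H_1[u] = \sum_{u\in T}|H_1[u]-H_2[u]| \le \ellnorm{H_1-H_2}{1} \le 1$, and in particular the weaker per-coordinate bound $H_1[u] \le 1$. A union bound over $T$ with the Laplace tail $\Pr[X_u > x] = \tfrac12 e^{-\eps x}$ then gives $P(\bar E) \le 1 - \bigl(1 - \tfrac12\exp(-\eps\rho + \eps)\bigr)^{|T|}$, where the factor $e^{\eps}$ replaces the $e^{\eps/|T|}$ of the original estimate. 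Solving $P(\bar E)\le\delta$ for $\rho$ and maximizing over $1 \le |T| \le \Delta_0$ recovers exactly the claimed threshold, the leading $1$ (in place of $1/t$) being precisely the fingerprint of the weaker weight bound.

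The main obstacle is conceptual rather than computational: one must recognize that bounded $\ell_1$-sensitivity controls only the \emph{total} weight that $H_1$ places on the new items $T$, so a single new item may carry weight as large as the entire budget $1$, and this is the reason the threshold must degrade from $1/t$ to $1$. Everything else, including the conditioning argument $P|_E \approx_{\eps,0} Q$ and the final algebraic solve for $\rho$, carries over from the specific case with no new ideas.
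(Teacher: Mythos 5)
Your proposal is correct and follows exactly the paper's intended route: the paper proves this theorem by stating that it is identical to the proof of Theorem~\ref{thm:policy-laplace} except that in Equation~(\ref{eqn:union_bound_laplace}) one bounds $H_1[u]\le 1$ instead of $H_1[u]\le 1/|T|$, which is precisely the single change you identify and justify (via $\sum_{u\in T}|H_1[u]-H_2[u]|\le \ellnorm{H_1-H_2}{1}\le 1$). The conditioning on $E$, the post-processing step, the union bound, and the final solve for $\rho$ all match the paper's argument.
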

\begin{proof}
Proof of Theorem~\ref{thm:sensitivity_implies_DP_Laplace} is extremely similar to the proof of Theorem~\ref{thm:policy-laplace}. The only place where it differs is in Equation~(\ref{eqn:union_bound_laplace}) where we bound $H_1[u]\le 1$ instead of $H_1[u]\le 1/|T|.$
\end{proof}

\begin{theorem}
\label{thm:sensitivity_implies_DP_Gaussian}
Suppose the histogram output by Algorithm~\ref{alg:meta_histogram} has $\ell_2$-sensitivity 1. Then Algorithm~\ref{alg:meta} is $(\epsilon, \delta)$-$DP$ when the \textsf{Noise} distribution is $\cN(0,\sigma^2)$ where $\sigma$ and the threshold $\rho$ are chosen s.t.
\begin{align*}
&\Phi\left(\frac{1}{2\sigma}-\eps\sigma\right)-e^\eps\Phi\left(-\frac{1}{2\sigma}-\eps\sigma\right)\le \frac{\delta}{2} \text{ and}\\
&\rho \geq \max_{1\le t \le \Delta_0}\left(1+\sigma \Phi^{-1}\left(\left(1-\frac{\delta}{2}\right)^{1/t}\right)\right).
\end{align*}
\end{theorem}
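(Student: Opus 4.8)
The plan is to transcribe the proof of Theorem~\ref{thm:policy-gaussian} almost verbatim, the only genuine change being a weaker per-coordinate bound on the histogram of the larger database. Fix neighboring databases $D_1,D_2$ where $D_1$ contains one extra user, let $H_1,H_2$ be the histograms that Algorithm~\ref{alg:meta_histogram} builds on them, and let $P,Q$ be the resulting output distributions of Algorithm~\ref{alg:meta}. As before I would introduce the conditioning event $E=\{A\subseteq \supp(H_2)\}$, i.e.\ the output contains no item outside the support of the smaller histogram. The target $P\approx_{\eps,\delta} Q$ will then be obtained by combining a clean ``core'' bound $P|_E\approx_{\eps,\delta/2} Q$ with a tail bound $P(E)\ge 1-\delta/2$ through the conditioning lemma (Lemma~\ref{lem:eps_delta_conditioning}), which with $\delta'=\delta/2$ yields $P\approx_{\eps,\delta} Q$.

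First I would establish $P|_E\approx_{\eps,\delta/2} Q$. This step uses only the $\ell_2$-sensitivity hypothesis and is identical to the corresponding claim in Theorem~\ref{thm:policy-gaussian}: the output under $P|_E$ and under $Q$ are both obtained by post-processing (thresholding the noisy histogram against $\rho$ and keeping coordinates in $\supp(H_2)$) of $\hat H_1|_{\supp(H_2)}$ and $\hat H_2$ respectively, where each adds i.i.d.\ $\cN(0,\sigma^2)$ noise. Since restricting to a subset of coordinates only decreases the norm, $\ellnorm{H_1|_{\supp(H_2)}-H_2}{2}\le \ellnorm{H_1-H_2}{2}\le 1$ by hypothesis, so the Gaussian mechanism (Proposition~\ref{lem:gaussian_mechanism}) with the stated choice of $\sigma$ gives $\hat H_1|_{\supp(H_2)}\approx_{\eps,\delta/2}\hat H_2$, and the post-processing lemma (Lemma~\ref{lem:post_processing_eps_delta}) preserves this after thresholding.

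The only step that differs is the tail bound $P(\bar E)\le \delta/2$. Set $T=\supp(H_1)\setminus\supp(H_2)$, so that $\bar E$ is the event that some item of $T$ exceeds the threshold in $\hat H_1$, and $|T|\le \Delta_0$. Here, where the proof of Theorem~\ref{thm:policy-gaussian} invoked the policy-specific estimate $H_1[u]\le 1/\sqrt{|T|}$, I would instead use only the sensitivity hypothesis: since $H_2[u]=0$ for $u\in T$, we have $\sum_{u\in T} H_1[u]^2\le \ellnorm{H_1-H_2}{2}^2\le 1$, hence $H_1[u]\le 1$ for every $u\in T$. Substituting this coarser bound into the same union bound over the independent coordinates of $T$ gives $P(\bar E)\le 1-\Phi\!\left(\frac{\rho-1}{\sigma}\right)^{|T|}$, and choosing $\rho\ge 1+\sigma\,\Phi^{-1}\!\big((1-\delta/2)^{1/|T|}\big)$ forces this to be at most $\delta/2$; taking the maximum over $1\le t\le\Delta_0$ covers every admissible size of $T$ and yields the stated threshold.

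I do not expect a serious obstacle, since the argument is essentially a transcription of Theorem~\ref{thm:policy-gaussian}. The single substantive point is recognizing that the $\ell_2$-sensitivity bound alone only delivers the per-coordinate estimate $H_1[u]\le 1$ rather than the sharper $1/\sqrt{|T|}$ available for the \textsc{$\ell_2$-descent} policy, which is precisely why the threshold degrades from $\frac{1}{\sqrt t}$ to $1$ inside the maximum --- the ``slightly worse'' bound promised in the text.
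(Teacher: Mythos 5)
Your proposal is correct and follows exactly the route the paper takes: it reuses the proof of Theorem~\ref{thm:policy-gaussian} wholesale and changes only the per-coordinate estimate in the tail bound from $H_1[u]\le 1/\sqrt{|T|}$ to $H_1[u]\le 1$, which is precisely the modification the paper states. Your derivation of $H_1[u]\le 1$ from $\sum_{u\in T}H_1[u]^2\le\ellnorm{H_1-H_2}{2}^2\le 1$ is a correct (and slightly more explicit) justification of that step than the paper provides.
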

\begin{proof}
Proof of Theorem~\ref{thm:sensitivity_implies_DP_Gaussian} is extremely similar to the proof of Theorem~\ref{thm:policy-gaussian}. The only place where it differs is in Equation~(\ref{eqn:union_bound_gaussian}) where we bound $H_1[u]\le 1$ instead of $H_1[u]\le 1/\sqrt{|T|}.$
\end{proof}

\section{Weighted Laplace and Gaussian algorithms}
\subsection{Weighted Laplace}

\begin{algorithm}[!h]
 \caption{\textsc{Laplace} weighted update}
 \label{alg:weighted-laplace}
 \begin{algorithmic}
 \STATE {\bfseries Input:} $H$: Current histogram \\
 $W$: A subset of $U$ of size at most $\Delta_0$
\STATE {\bfseries Output:} $H$: Updated histogram
 \FOR{$u$ in $W$}
	\STATE H[$u$] $\leftarrow$ H[$u$] + $\frac{1}{|W|}$
\ENDFOR
	\end{algorithmic}
\end{algorithm}

\begin{theorem}
\label{thm:weighted-laplace}
The \textsc{Weighted Laplace} algorithm (Algorithm \ref{alg:weighted-laplace}) is $(\epsilon, \delta)$-$DP$ when $$\rho_{\Lap} \geq \max_{1\le t\le \Delta_0} \frac{1}{t}+\frac{1}{\epsilon}\log\left(\frac{1}{2\left(1-(1-\delta)^{1/t}\right)}\right).$$
\end{theorem}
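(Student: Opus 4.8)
The plan is to mirror the proof of Theorem~\ref{thm:policy-laplace} almost verbatim, since the threshold bound here is identical to the one proved there. The only things that genuinely need re-checking are that the \textsc{Weighted Laplace} update supplies the same two ingredients that drove that proof: bounded $\ell_1$-sensitivity, and the sharp weight bound on the items that are unique to the extra user.

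First I would establish that the histogram produced by Algorithm~\ref{alg:weighted-laplace} has $\ell_1$-sensitivity at most $1$. Unlike the \textsc{Policy Laplace} case, this needs no contractivity machinery: each user's update is independent of every other user, so the histogram is order-independent, and for neighboring databases $D_1 \supset D_2$ the histograms $H_1, H_2$ differ only by the extra user's contribution. That user adds $\frac{1}{|W'|}$ to each of its $|W'| \le \Delta_0$ (truncated) items, for a total $\ell_1$ change of exactly $|W'|\cdot\frac{1}{|W'|}=1$. Hence $\ellnorm{H_1-H_2}{1}\le 1$, and in particular $\ellnorm{H_1|_{\supp(H_2)}-H_2}{1}\le 1$.

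With sensitivity in hand, I would reproduce the conditioning argument. Fix neighboring databases $D_1 \supset D_2$, let $P, Q$ be the output distributions, and let $E$ be the event $A \subset \supp(H_2)$. Conditioned on $E$, the output is a post-processing of $\hat H_1|_{\supp(H_2)}$, whereas $A \sim Q$ is a post-processing of $\hat H_2$; since these two noisy histograms come from vectors at $\ell_1$-distance at most $1$, the Laplace mechanism together with the post-processing lemma (Lemma~\ref{lem:post_processing_eps_delta}) gives $P|_E \approx_{\eps,0} Q$. By Lemma~\ref{lem:eps_delta_conditioning} it then suffices to show $P(E) \ge 1-\delta$.

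The one quantitative step to redo is the tail bound on $P(\bar E)$, and the crucial observation — exactly as in Theorem~\ref{thm:policy-laplace} — is that the items unique to the extra user all receive equal weight. Writing $T = \supp(H_1)\setminus\supp(H_2)$, every $u \in T$ satisfies $H_1[u] = \frac{1}{|W'|} \le \frac{1}{|T|}$, because $T \subseteq W'$ forces $|T| \le |W'|$. This is precisely the bound that produces the sharper $\frac{1}{t}$ term rather than the weaker constant $1$ of Theorem~\ref{thm:sensitivity_implies_DP_Laplace}. Substituting it into the independent-coordinate union bound reproduces Equation~(\ref{eqn:union_bound_laplace}) verbatim, and the stated choice of $\rho_{\Lap}$ then forces $P(\bar E)\le\delta$. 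The main (and rather mild) obstacle is simply verifying the weight bound $H_1[u]\le 1/|T|$; everything else transfers unchanged from the \textsc{Policy Laplace} analysis.
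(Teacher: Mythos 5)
Your proposal is correct and matches the paper's approach: the paper simply states that the proof is identical to that of Theorem~\ref{thm:policy-laplace}, and you have correctly identified and verified the only two ingredients that need re-checking (the $\ell_1$-sensitivity bound of $1$, which here follows trivially from the order-independent update adding $|W'|\cdot\frac{1}{|W'|}=1$ in total, and the weight bound $H_1[u]=\frac{1}{|W'|}\le\frac{1}{|T|}$ for $u\in T\subseteq W'$). Nothing is missing.
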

\begin{proof} Proof is exactly the same as that of Theorem~\ref{thm:policy-laplace}.
\end{proof}

\subsection{Weighted Gaussian}
\begin{algorithm}[!h]
 \caption{\textsc{Gaussian} weighted update}
 \label{alg:weighted-gaussian}
 \begin{algorithmic}
 \STATE {\bfseries Input:} $H$: Current histogram \\
 $W$: A subset of $U$ of size at most $\Delta_0$
\STATE {\bfseries Output:} $H$: Updated histogram
 \FOR{$u$ in $W$}
	\STATE H[$u$] $\leftarrow$ H[$u$] + $\sqrt{\frac{1}{|W|}}$
\ENDFOR
	\end{algorithmic}
\end{algorithm}

\begin{theorem}
\label{thm:weighted-gaussian}
The \textsc{Weighted Gaussian} algorithm (Algorithm \ref{alg:weighted-gaussian}) is $(\epsilon, \delta)$-DP if $\sigma,\rho_\Gauss$ are chosen s.t.
\begin{align*}
&\Phi\left(\frac{1}{2\sigma}-\eps\sigma\right)-e^\eps\Phi\left(-\frac{1}{2\sigma}-\eps\sigma\right)\le \frac{\delta}{2} \text{   and   }\\
&\rho_{\Gauss} \geq \max_{1\le t \le \Delta_0}\left(\frac{1}{\sqrt{t}}+\sigma \Phi^{-1}\left(\left(1-\frac{\delta}{2}\right)^{1/t}\right)\right).
\end{align*}
\end{theorem}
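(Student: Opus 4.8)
The plan is to follow the proof of Theorem~\ref{thm:policy-gaussian} essentially verbatim, the only differences being how we establish the two ingredients it relies on: bounded $\ell_2$-sensitivity of the histogram, and the pointwise bound on the weights of items unique to the extra user. Note that for the weighted update neither ingredient requires the contractivity machinery.

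First I would verify that the \textsc{Weighted Gaussian} update (Algorithm~\ref{alg:weighted-gaussian}) yields a histogram of $\ell_2$-sensitivity at most $1$, which here is immediate because the update is \emph{independent} of the current histogram. Coupling the hash values and the internal subsampling of the shared users exactly as in the proof of Theorem~\ref{thm:contractivity_implies_sensitivity}, if $D_1$ has one extra user over $D_2$ we have $H_1 = H_2 + v$ where $v$ is the extra user's contribution vector. Since $v[u]=\sqrt{1/|W'|}$ on the (at most $\Delta_0$) sampled items $W'$ and $0$ elsewhere, $\ellnorm{v}{2}=\sqrt{|W'|\cdot(1/|W'|)}=1$, so $\ellnorm{H_1-H_2}{2}\le 1$.

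Next I would reuse the Gaussian output-perturbation step of Theorem~\ref{thm:policy-gaussian} unchanged: conditioning on the event $E$ that the output $A\subseteq\supp(H_2)$, the two output distributions are obtained by post-processing $\hat H_1|_{\supp(H_2)}$ and $\hat H_2$, which differ in $\ell_2$ by at most $1$; hence Proposition~\ref{lem:gaussian_mechanism} together with the post-processing Lemma~\ref{lem:post_processing_eps_delta} and the first condition on $\sigma$ give $P|_E\approx_{\eps,\delta/2}Q$. To control $P(\bar E)$ I would invoke the union-bound computation~(\ref{eqn:union_bound_gaussian}). The one place needing a short verification specific to the weighted policy is the pointwise bound $H_1[u]\le 1/\sqrt{|T|}$ for every $u\in T:=\supp(H_1)\setminus\supp(H_2)$: such an item is contributed only by the extra user, so $H_1[u]=\sqrt{1/|W'|}$, and since $T\subseteq W'$ we have $|W'|\ge|T|$ and thus $\sqrt{1/|W'|}\le 1/\sqrt{|T|}$. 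Plugging this into~(\ref{eqn:union_bound_gaussian}) and using the stated lower bound on $\rho_\Gauss$ gives $P(\bar E)\le\delta/2$, and Lemma~\ref{lem:eps_delta_conditioning} combines the two halves to yield $P\approx_{\eps,\delta}Q$.

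I expect no genuine obstacle here: all the difficulty of the paper lives in proving contractivity (Lemmas~\ref{lem:contraction_ell1} and~\ref{lem:contraction_ell2}), whereas for the independent weighted update bounded sensitivity is free. The only subtlety is the bound $H_1[u]\le 1/\sqrt{|T|}$, which is precisely what permits the tighter threshold $1/\sqrt{t}$ in place of the generic $1$ one would obtain by invoking Theorem~\ref{thm:sensitivity_implies_DP_Gaussian} directly.
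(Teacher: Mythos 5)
Your proposal is correct and follows the same route as the paper, which simply states that the proof is identical to that of Theorem~\ref{thm:policy-gaussian}. You go slightly further by explicitly verifying the two policy-specific facts the paper leaves implicit --- that the histogram has $\ell_2$-sensitivity $1$ because the update is history-independent with $\ellnorm{v}{2}=1$, and that $H_1[u]\le 1/\sqrt{|T|}$ for $u\in T$ since $T\subseteq W'$ --- both of which check out.
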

\begin{proof} Proof is exactly the same as that of Theorem~\ref{thm:policy-gaussian}.
\end{proof}

\end{document}